 \theoremstyle{remark}
 \theoremstyle{definition}
 \newtheorem{prop}{\protect Proposition}[section]
\newcommand{\argmax}{\operatornamewithlimits{argmax}}
\newcommand{\pr}{\mathrm{pr}}
\def\d{\mathrm{d}}
\newcommand{\btheta}{\boldsymbol{\theta}}
\newcommand{\bbeta}{\boldsymbol{\beta}}
\newcommand{\Yk}{\mathbf{Y}_k}
\newcommand{\Sk}{\mathbf{S}_k}
\newcommand{\Ck}{\mathbf{C}_k}
\newcommand{\Data}{\mathbf{D}}
\newcommand{\DataP}{\mathbf{D}^Y}
\providecommand{\examplename}{Example}
\newcommand{\secname}{Section}
\title{A cautious use of auxiliary outcomes for decision-making in randomized clinical trials.}
\date{}
\author{Massimiliano Russo\thanks{Department of Statistics, the Ohio State University, Columbus Ohio, U.S.A, russo.325@osu.edu},
Steffen Ventz\thanks{Division of Biostatistics and Health Data Science, School of Public Health, University of Minnesota, Minneapolis, MN, U.S.A.},
      and Lorenzo Trippa\thanks{T.H. Chan School of Public Health,  and Dana-Farber Cancer Institute, Boston, U.S.A.}}
\begin{document}
\maketitle
\begin{abstract}
\setstretch{1.0}
Clinical trials often collect data on multiple outcomes, such as overall survival (OS), progression-free survival (PFS), and response to treatment (RT). In most cases, however, study designs only use primary outcome data for interim and final decision-making. In several disease settings, clinically relevant outcomes, for example OS, become available years after patient enrollment. Moreover, the effects of experimental treatments on OS might be less pronounced compared to auxiliary outcomes such as RT. We develop a Bayesian decision-theoretic framework that uses both primary and auxiliary outcomes for interim and final decision-making. The framework allows investigators to control standard frequentist operating characteristics, such as the type I error rate and can be used with auxiliary outcomes from emerging technologies, such as circulating tumor assays. False positive rates and other frequentist operating characteristics are rigorously controlled without any assumption about the concordance between primary and auxiliary outcomes. We discuss algorithms to implement this decision-theoretic approach and show that incorporating auxiliary information into interim and final decision-making can lead to relevant efficiency gains according to established and interpretable metrics. 
\end{abstract}
\textbf{Keywords:}  Auxiliary variables, breast cancer, decision theory, glioblastoma, multiple hypothesis testing, trial design

\section{Introduction}
Clinical trials evaluate the efficacy and safety of experimental treatments. In most cases, regulatory decisions on the efficacy of treatments are based on comparisons of primary outcomes between the experimental treatment and the standard of care (SOC). In oncology, these comparisons typically attempt to demonstrate improvements in  overall survival (OS). However, in several disease settings, collecting data on OS requires long follow-up times. Therefore, using only OS for decision-making may be inefficient and expose patients to potentially ineffective experimental treatments.

Some outcomes are available earlier than the primary outcomes of interest and may provide stronger signals of treatment effects, potentially reducing trial size and duration~\citep[e.g.,][]{lothar:2012,chen:2019}. In oncology, outcomes such as progression-free survival (PFS), response to treatment (RT), and tumor size reduction have also been used in regulatory decisions. When these are used in place of the primary outcomes for decision-making, they are referred to as \textit{surrogate outcomes}. Alternatively, investigators may report summaries of these outcomes without using them  for decision-making,  during and or the end of the study. Also, in some trials, outcomes such as PFS inform only interim decisions (e.g., early stopping), without influencing the final reporting of primary results and recommendations at the end of the study.

An ideal surrogate outcome should unambiguously inform on the effects of experimental treatments on the primary outcome (see \cite{prentice:1989} for formal definitions). However, the relationship between treatment effects on primary and surrogate outcomes is often hypothetical or controversial. For instance, in glioblastoma, lomustine combined with bevacizumab showed improved PFS compared to lomustine alone, but later studies indicated no effect on OS~\citep{wick:2017}. 
There is extensive literature on how to evaluate agreement between the assessment of experimental therapies based on primary outcomes and evaluations based on surrogate endpoints \citep[e.g.,][]{buyse:2000,elliot:2014,vandenberghe:2018}. When prior evidence indicates strong agreement, investigators may consider using the surrogate as the primary endpoint in future trials, while weighing its benefits and risks.

In this paper, we examine the use of {\it auxiliary outcomes}---outcomes that have not been rigorously validated, such as tumor response in glioma---to support decision-making in clinical trials. When  trials are conducted, regulators would not accept these outcomes as surrogates and would require evidence of treatment effects on the primary outcome for drug approval. We develop a method for making interim and final clinical trial decisions---such as recommendations for treating future patients with the experimental therapy---that incorporates both primary and auxiliary outcomes while adhering to strict regulatory requirements regarding the risks of false positive results. Examples of regulatory requirements include the control  the family-wise error rate (FWER) if the treatment is tested in multiple subgroups.
We illustrate the potential advantages of joint modeling of auxiliary and primary outcomes discussing two  examples:

{\it 1-Trials designs with patient subgroups.} The aim is to provide evidence of treatment effects for specific subgroups while controlling the FWER~\citep[e.g.,][]{leblanc:2009,chugh:2009}. Auxiliary outcomes can help identify which subgroups benefit most from the experimental treatment, thereby increasing the likelihood of reporting positive results without compromising the  control of the FWER.

{\it 2-Multi-stage trial designs~\citep[e.g.,][]{simon:1989, ensign:1994}.} Auxiliary outcomes can support early futility decisions while ensuring power above a desired threshold and strict control of the probability of a false positive result. Unpromising results on auxiliary outcomes can lead to early termination of the study, potentially reducing resources spent on ineffective treatments.

In both cases, the regulatory constraints are defined through standard frequentist requirements. 

We rely on the Bayesian decision-theoretic framework~\citep{berger:1985} to incorporate primary and auxiliary outcomes for decision-making. Throughout the paper, the optimal clinical trial design, which includes a detailed plan for interim and final decisions,  coincides with the solution of a constrained maximization problem 
defined by three key components:  (i) a utility function \citep{lindley:1976}; (ii) a Bayesian prior model representative of previous studies that includes primary and auxiliary outcomes;  and (iii) constraints on frequentist operating characteristics, such as the rigorous control of the probability of a false positive result below a pre-specified $\alpha$ level, enforced across all potential scenarios in which the experimental treatment does not improve the primary outcomes. Importantly, the design is optimized using a Bayesian model, but also satisfies frequentist constraints. 

Our decision-making approach considers the perspectives of two distinct stakeholders. The utility function and prior model represent the viewpoint of the investigators or the pharmaceutical company. Their primary objective is to select a study design that maximizes expected utility (e.g., projected revenue if the drug succeeds) while minimizing losses (e.g., sunk costs if development is halted), given the information incorporated in the prior model. From a Bayesian decision-theoretic perspective, the optimal design for this stakeholder maximizes the expected utility. However, using this ideal, unconstrained optimization is often impractical due to regulatory requirements imposed by a different stakeholder, the regulator (e.g., the FDA). Consequently, our optimal designs will aim to maximize expected utility criteria
within a constrained subset of candidate study designs that comply with regulatory requirements (e.g., family-wise error rate or statistical power). 

Differences in how investigators and regulators view auxiliary outcomes are often evident. Investigators typically expect auxiliary outcomes to be representative of the clinical benefit captured by the primary outcomes. At the same time, regulators require direct evidence of treatment effects on the primary outcomes, with stringent control of false positives and bias. For example, in breast cancer research, trial sponsors have long regarded early responses to therapy as promising predictors of long-term benefit. Over the past two decades, however, regulatory agencies have progressively changed their recommendations on the use of early outcomes, informed by large-scale evaluations of candidate surrogate endpoints across multiple clinical trials~\citep{spring:2020}.

Bayesian decision theory has been used to design clinical trials~\citep{lewis:1994,ventz:2015,thall:2019}. It is useful for selecting a study design or a method, such as a testing procedure, among candidate designs/methods through an interpretable measure of utility (or loss) of the trial, tailored to the study aims \citep{arfe:2020}. To the best of our knowledge,  earlier work has not investigated Bayesian designs that include auxiliary outcomes and attain a frequentist control of false positive results across all scenarios without positive effects on the primary outcomes. We show that the joint use of primary and auxiliary outcomes for interim and final decisions can increase the utility of the clinical trial without compromising the rigorous control of frequentist operating characteristics, such as the FWER.

Controlling for frequentist characteristics is a key aspect of our approach compared to other Bayesian designs, which often rely on simulations to evaluate these characteristics before the onset of the trial. Selecting representative scenarios for these simulations can be challenging~\citep{han:2024}. Moreover, a limited number of scenarios without treatment effects on primary outcomes may not adequately capture the variation in false-positive rates across different joint distributions of covariates and outcomes. Other Bayesian approaches~\citep{calderazzo:2020} advocate relaxing frequentist constraints so that they hold on “average” rather than pointwise. By combining Bayesian models with utility functions, these methods relax frequentist criteria in a valuable way and warrant further development and discussion.

The paper proceeds as follows: \secname{s}~\ref{sec:notation}~and~\ref{sec:betterDecisions} introduce the notation and explain how we leverage auxiliary outcomes in  decisions during and at the end of clinical trials. \secname~\ref{sec:approximation} describes an approximation strategy that can be used to design clinical trials.
\secname{s}~\ref{sec:weighted_bonferroni} and~\ref{sec:sequential} discuss the operating characteristics of designs that integrate primary and auxiliary outcomes for decision making.  Finally, \secname~\ref{sec:discussion} provides a brief discussion. Code for the computational procedures used to approximately identify optimal trial designs that satisfy the specified frequentist operating characteristics  is available at \url{https://github.com/rMassimiliano/primary\_and\_auxiliary}.

\section{Primary and auxiliary outcomes} \label{sec:notation}
\subsection{Notation} 
We consider a randomized controlled trial (RCT) enrolling up to $N$ patients, comparing an experimental treatment with SOC in $K\ge1$ pre-defined subgroups (e.g., by demographics or biomarkers). Also, $N_k\ge0$ indicates the number of enrollments in subgroup $k=1,\ldots,K$ at the end of the trial.
 
{\it Data and prior model.}
For each patient $i= 1, \ldots, N_k$ in subgroup $k=1, \ldots, K$,  $Y_{k,i} \in \mathcal Y$  indicates the primary outcome (e.g., OS),  $S_{k,i} \in \mathcal S$ the auxiliary outcome (e.g., RT), and $C_{k,i} \in\{0,1\}$  the assignment to the SOC ($C_{k,i} =0$) or experimental treatment ($C_{k,i}=1$). 
In what  follows $\Data_k = \{Y_{k,i},S_{k,i},C_{k,i}\}_{i=1}^{N_k} \buildrel iid \over \sim p_{\theta_k}(\Data_k)$ indicate the data ($\Data_k$) and the distributions ($ p_{\theta_k}$) for subgroups $k=1, \ldots, K$, with unknown parameters $\btheta = (\theta_1, \ldots, \theta_K) \in \Theta$. We use $p_{\btheta}(\Data)$ for the joint distribution of  $\Data = \{\Data_k\}_{k=1}^K$,  including patients from various subgroups, and $\mathcal D$ to refer to the sample space. We use $\DataP_k = \{Y_{k,i},C_{k,i}\}_{i=1}^{N_k}$, $\DataP = \{\DataP_k\}_{k=1}^K$ to indicate the primary outcome data. Information on $\btheta$ is expressed through a prior distribution $\pi(\btheta)$ on $\Theta$ that incorporates the prior belief of the investigator or pharmaceutical company. 

{\it Treatment effects on the primary outcomes.}
Treatment effects on the primary outcome in group $k$ is quantified by a summary $\gamma_k=\gamma_k(\btheta)$, where $\gamma_k >0$ indicates a positive treatment effect in group $k$.  For example, $\gamma_k = \mathbb E[Y_{k,i} \mid C_{k,i} = 1] - \mathbb E[Y_{k,i} \mid C_{k,i} = 0]$ indicates the subgroup-specific difference of the average primary outcome under the experimental treatment and the SOC. We will test subgroup-specific null hypotheses $H_{0,k}: \gamma_k \leq 0$ for groups $k = 1,\ldots, K$.

{\it Sequential decisions.}
In RCT, data are often analyzed sequentially at interim analyses (IAs) in $T$ stages, potentially leading to major decisions such as early termination of the study or an increase of the sample size $N$. When necessary (\secname~\ref{sec:sequential}), we use the superscript $t$ to indicate data up to stage $t$. In particular, $\Data^{t}$ will indicate data up to the $t$-th stage of the trial,  $t=1,\ldots, T$, and $\Data = \Data^{T}.$ 

\subsection{Optimization of the expected utility with regulatory constraints}

{\it Actions.}
During the trial, $M$ binary decisions (or actions) $\mathbf{a} = (a_1, \ldots, a_M) \in \mathcal A= \{0,1\}^M$ are made. For example, when $K=1$ and $T>1$, the $M=T$ decisions $\{a_m\}_{m=1}^T$, might refer to continuing ($a_m = 1$) or stopping ($a_m = 0$) the study early for futility at IA $m=1,\ldots, T-1$, and reporting evidence of treatment effects ($a_T = 1$) or not ($a_T = 0$) at the end of the trial.
 
{\it Decision functions.}
Decisions during the trial are made using a function $\varphi : {\tt data} \to \mathcal A$  which maps available data into actions $\mathbf{a}$. For example, futility IAs and the final analysis (FA) at times $t=1, \cdots, T$ use the available data up to stage $t$. In this case, the decisions  are $\varphi(\Data)= (\varphi_{1}(\Data^1), \varphi_{2}(\Data^2), \ldots, \varphi_{T}(\Data^T))$. We let $\varPhi$ indicate the space of all potential decision functions that can be used during the trial. 

{\it Regulatory constraints.}
The decision function has to meet regulatory requirements, such as controlling the FWER. Regulatory constraints are formalized by the inequality $\mathbb E_{\btheta}[h(\varphi(\Data), \btheta)] \leq \alpha$, for every $\btheta \in \Theta.$ The expectation is with respect to the unknown distribution of the data $p_{\btheta}(\Data)$.  For example, when $M=K=1$ and $\varphi(\Data)$ is the result of the final testing  of treatment effects, the function $h(\varphi(\Data),\btheta) = \varphi(\Data) \times 1\{\gamma_1 \leq 0 \}$ indicates if there is a false positive result or not. Therefore, if $\mathbb E_{\btheta}[h(\varphi(\Data), \btheta)] \leq \alpha$ for every $\btheta \in \Theta$, then $\varphi$ controls the type I error rate at level $\alpha$. Similarly, when testing $\{H_{0,k}\}_{k=1}^K$ for $K>1$ subgroups, with $T=1$ and $M=K$,  the function $h(\varphi(\Data), \btheta ) = 1 \{ \sum_{k=1}^K 1\{ \gamma_k \leq 0\} \varphi_{k}(\Data) > 0  \}$ can be used to control the FWER. The expectation $\mathbb E_{\btheta}[h(\varphi(\Data), \btheta ) ]$ is equal to the probability of reporting at least one false positive result $\mbox{pr}_\theta \left(\cup_{k: \gamma_k \leq 0}  \{\varphi_{k}( \mathbf{D}) =1\}\right)$. For a specific function $h$,  $\Phi_{\alpha,h} \subset \varPhi$ indicates the set of decision functions $\varphi$ that satisfy the regulatory constraints $\mathbb E_{\btheta}[h(\varphi(\Data), \btheta)] \leq \alpha$ for all $\btheta \in \Theta$. We use $\Phi'_{\alpha,h} \subset \varPhi$ to indicate the subset of decision functions that satisfy the regulatory constraints and ignore the auxiliary outcome, i.e., those that use only the available information on the primary outcomes $\Data^Y$.

{\it Utility functions.} 
In what follows $u(\mathbf{a}, \btheta)$ indicates a {\em utility function}, which depends on the unknown $\btheta$ parameters and the actions $\mathbf{a}$. Appropriate utility functions are crucial for decision-theoretic procedures. In clinical trials, the utility function typically summarizes the study's costs and the potential benefits of demonstrating the efficacy of an experimental therapy, representing relative preferences and making explicit the goals of the study. We refer to \citet{thall:2019}  and \citet{lee:2022} for examples and discussions on the elicitation of utility functions in cancer research. These authors, among others,  developed utility functions that capture multiple aspects of a clinical study, such as drug-related adverse events and patient quality of life. When $M=K$, $T=1$, and the action $\mathbf{a}$ coincides with rejecting or not the null hypotheses $\{H_{0,k}\}_{k=1}^K$,  then an interpretable utility function $u(\mathbf{a},\btheta) = \sum_{k=1}^K [1\{a_k=1, \gamma_k> 0\} -\lambda 1\{a_k=1, \gamma_k\leq 0\}]$ includes a unitary reward for each true positive result and a penalty $(\lambda >0)$ for each false positive result. This utility function can represent economic interests, including potential gains and costs associated with true and false positive subgroup-specific results. Importantly, other utility functions can represent different interests, for example the life expectancy of future patients diagnosed with a specific disease; see \citet{berry1995adaptive} for a discussion. 

{\it Translating primary and auxiliary outcomes into actions.} 
In the Bayesian decision-theoretic framework~\citep{berger:1985}, without regulatory restrictions, the decision maker selects the decision function $\varphi^* \in \varPhi$ that maximizes the expected utility $\mathbb E^{\pi} [u(\varphi(\Data),\btheta)]= \int_{\mathcal D \times \Theta} u(\varphi(\Data),\btheta) d\pi(\btheta, \Data)$, where the expectation is computed with respect to the joint distribution $\pi(\btheta, \Data) = \pi(\btheta) p_{\btheta}(\Data)$ of the data $\Data$ and the parameters $\btheta$. Note that $\varphi^*$ translates  the available primary $(Y_{k,i})$ and auxiliary $(S_{k,i})$ data into actions $\varphi^*(\Data).$

{\it The decision-theoretic paradigm with regulatory constraints.} 
The decision function $\varphi^*(\Data)$  is selected without accounting for regulatory constraints. In contrast, we propose to select the decision function $\varphi_{Y,S}$ that maximizes the expected utility among those that are compliant with the regulatory constraints, $\mathbb E_{\btheta}[h(\varphi(\Data), \btheta)] \leq \alpha$ for all $\btheta \in \Theta$. That is, 
\begin{equation}
	\varphi_{Y,S} \in \argmax_{\varphi \in \varPhi_{\alpha,h} } \mathbb E^\pi[u(\varphi(\Data),\btheta)]. 
 \label{eq:maximization}
\end{equation}
The constraint $\varphi \in \varPhi_{\alpha,h}$ can be used to consider only testing procedures that control the type I error rate at a desired $\alpha$-level across all scenarios without positive effects on the primary outcomes.

The maximization in expression \eqref{eq:maximization} in most cases is analytically intractable. In Section~\ref{sec:approximation} we will consider approximations of this optimization problem.
\begin{table}[h!]
	\center
\resizebox{\textwidth}{!}{
\begin{tabular}{ll}
	\toprule
	$Y_{k,i}\in \mathcal Y$ & primary outcome, subject $i$ in group $k$ \\
	$S_{k,i}\in \mathcal S$ & auxiliary outcome, subject $i$ in group $k$\\
	$C_{k,i} \in \{0,1\}$ & treatment assignment, subject $i$ in group $k$\\
 $\Data = \{\Yk,\Sk,\Ck\}_{k=1}^K \in \mathcal D$ & trial data \\ 
 $\DataP = \{\Yk,\Ck\}_{k=1}^K\in \mathcal D^Y$ & trial data excluding auxiliary outcomes \\ 
	$p_{\btheta}(\Data)$ & distribution of the clinical trial dataset \\
	$\btheta =(\theta_1, \ldots, \theta_K) \in \Theta$ & group-specific parameters \\
	$\pi(\btheta)$ & prior distribution\\
	$\pi(\Data,\btheta) = p_{\btheta}(\Data)\pi(\btheta)$ & prior model \\
	$\gamma_k=\gamma_k(\btheta)$ & treatment effects on the primary outcome in group $k$\\
	 $\mathbf{a} = (a_1, \ldots, a_M) \in \mathcal A= \{0,1\}^M$ & a vector of actions \\
$\varphi(\cdot) = [\varphi_1(\cdot), \ldots, \varphi_M(\cdot)] $ & decision function that maps data into actions\\
$\varPhi$ & space of decision functions\\
$\varPhi_{\alpha,h}$ & space of regulator-compliant decision functions; $\mathbb E_{\btheta}[h(\varphi(\Data), \btheta)] \leq \alpha$ \\
$\varPhi'_{\alpha,h}$ & space of regulator-compliant decision functions that do not use information on the auxiliary outcomes \\
$\varphi^* $ & optimal choice within $\varPhi$, ignoring the regulatory constraints\\
$\varphi_{Y,S} $ & optimal choice within $\varPhi_{\alpha,h}$, compliant with the regulatory constraints\\
$\varphi_{Y} $ & optimal choice within $\varPhi'_{\alpha,h}$, compliant with the regulatory constraints\\
$u(\mathbf{a}, \btheta)$ & utility expressed as a function of actions $\mathbf{a}$ and unknown parameters $\btheta$ \\
$\mathbb E_\theta[\cdot]$ & expectation  with respect to the unknown distribution of the data $p_{\btheta}(\Data)$\\
$\mathbb E^\pi[\cdot]$ & expectation with respect to the joint distribution $\pi(\btheta, \Data) = \pi(\btheta) p_{\btheta}(\Data)$ of the data $\Data$ and parameters $\btheta$.\\
\bottomrule
\end{tabular}
}
\caption{Notation used throughout the paper.}
\end{table}

\section{Auxiliary outcomes can improve decisions}\label{sec:betterDecisions}
\label{test}
This section illustrates how incorporating auxiliary outcomes can improve decision-making while maintaining strict control of regulatory constraints. The extent of these improvements depends on the specified prior and utility criteria underlying the trial design. Using a stylized example, we compare the utility of an optimal decision function $\varphi_{Y,S} \in \varPhi_{\alpha,h}$, which uses both auxiliary and primary outcomes, to an optimal decision $\varphi_{Y} \in \varPhi'_{\alpha,h}$ that relies solely on primary outcomes. We focus on a simplified setting without subgroups and a single hypothesis test at the trial’s conclusion ($M=K=1$), omitting indices $m$ and $k$ and denoting parameters as $\theta$ and $\gamma$. While this paper focuses on randomized studies, for clarity, we consider a single-arm design with a sample size of one, assuming the probability of a positive primary outcome under control is known. In this example, $\varPhi_{\alpha,h}$ and $\varPhi'_{\alpha,h}$ represent sets of testing procedures controlling type I error at level $\alpha$; that is, $h(a,\btheta) = a \times 1\{\gamma(\theta) \leq 0\}, a=0,1$. If the null hypothesis $H_0: \gamma(\theta) \leq 0$ holds, then $p_\theta(\varphi(\Data)=1)<\alpha$ for every $\varphi \in \varPhi_{\alpha,h}$. Functions in $\varPhi_{\alpha,h}$ use both primary and auxiliary outcomes, whereas those in $\varPhi'_{\alpha,h} \subset \varPhi_{\alpha,h}$ use only primary outcomes, all under equivalent regulatory constraints.

\noindent \textbf{Example 3.1} {\em We consider a single-arm trial with binary primary and auxiliary outcomes $(Y_i,S_i) \in \{0,1\}^2$. The outcomes $Y_i$ and $S_i$ are independent, with $p_\theta(y,s) = {\theta^y_Y} (1-\theta_Y)^{1-y} \theta^s_S (1-\theta_S)^{1-s}$ and $\theta=(\theta_S,\theta_Y)$. The probability of a positive primary outcome under the control therapy is known and equal to $0.05.$ The study tests the null hypotheses $H_0: \theta_Y \leq 0.05$ versus $H_1: \theta_Y > 0.05$, $\alpha = 0.05$ and the utility function is 
\begin{align}\label{Exa:Ut:1}
u(a, \btheta) = \begin{cases}
1 & \mbox{ if } a = 1 \mbox{ and } \theta_Y > 0.05, \\
		-\lambda & \mbox{ if } a = 1 \mbox{ and } \theta_Y \leq 0.05,\\
 0 & \mbox{ otherwise.}
 \end{cases}
\end{align}
Utility function~\eqref{Exa:Ut:1} gives reward $1$ if $a=1$ (reject $H_0$) and $\theta_Y>0.05$,  loss $\lambda=100$ if $a=1$ and $\theta_Y\le0.05$, and is equal to $0$ otherwise.

We assume that prior studies suggest concordance between auxiliary and primary outcomes, expressed through the prior distribution $\pi(\theta) = \pi(\theta_Y) \pi(\theta_S\mid \theta_Y)$, with $\theta_Y \sim Unif(0,1)$ combined with a deterministic relation between $\theta_S$ and $\theta_Y$.  In particular, $\theta_S \mid \theta_Y \sim \delta_0(\theta_S)$ for every value of $\theta_Y \leq 0.05$, and symmetrically $\theta_S \mid \theta_Y \sim \delta_1(\theta_S)$  for every $\theta_Y > 0.05$. Here $\delta_x(\cdot)$ is a probability distribution that assigns probability $1$ to a single point $x$ and $0$ everywhere else. 

For simplicity, $N=1$ and the sets of candidate decision functions $\varPhi_{\alpha,h}$ with full information $(Y,S)$ and $\varPhi'_{\alpha,h}$ with partial information $(Y)$ include only non-randomized decisions (i.e., map the data into $\{0,1\}$). 
In this example, we can enumerate the functions in $\varPhi$. Since there are only four possible configurations of the study dataset $(Y,S)=(0,0)$, $(0,1)$, $(1,0)$, or $(1,1)$, the full set $\varPhi$ of candidate decision functions has cardinality $2^4$. Only $4$ functions in $\varPhi$ satisfy the regulatory constraint. In particular, there are two functions in $\varPhi'_{\alpha,h}$ that control the type I error rate and do not use auxiliary information: \\
(i) $\varphi$ never rejects $H_0$, i.e. $\varphi(\Data) = 0$, with expected utility 0, and \\
(ii) $\varphi$ rejects $H_0$ when $Y=1$, i.e. $\varphi(\Data) = Y$, with expected utility $\approx 0.37$. \\ There are two other candidate decision functions in $\varPhi_{\alpha,h}$ that control the type I error rate and leverage auxiliary information: \\
(iii) $\varphi$ rejects $H_0$ when $Y=S=1$, i.e. $\varphi(\Data) = Y \times S$, with expected utility $\approx 0.5$, and\\ 
(iv) $\varphi$ rejects $H_0$ when $Y=1-S=1$, i.e. $\varphi(\Data) = Y(1-S)$, with expected utility $\approx -0.13$.\\ 
Although counterintuitive, the decision function (iv) is a legitimate $\alpha$-level test. 
Note that the optimal decision function $\varphi_{Y,S}$ with full information $(Y,S)$ 
has higher expected utility ($\approx 0.5$) than the optimal decision function $\varphi_Y$ with partial information ($\approx 0.37$).
}

Our stylized example considers an informative and restrictive prior $\pi$ on $(\theta_Y,\theta_S)$ that can be easily modified to obtain a  different prior with full support (i.e., $[0,1]^2$). The example shows that with $\lambda=100$,  the inclusion of the auxiliary outcome for decision-making leads to the selection of the decision function $\varphi_{Y,S}= Y\times S$, which has the highest expected utility among all functions in $\varPhi_{\alpha,h}$. With other strictly positive values of $\lambda$, we would select the same decision function $\varphi_{Y,S}$. However, with $\lambda=0$,  the optimal decision function subject to regulatory constraints, $\varphi_{Y,S}$ (with auxiliary outcome) and $\varphi_{Y}$ (without auxiliary outcome) have identical expected utility ($\approx 0.5$).

This equality with $\lambda =0$ suggests that when we consider an $\alpha$-level test that maximizes the Bayesian expected power (BEP), a popular utility function,  that is 
 \begin{equation}
\varphi_{Y,S} \in \argmax _{\varphi \in \varPhi_{\alpha,h} }
 \mathbb E^\pi[ \varphi(\Data) 
 ],
\label{eq:max_1}
\end{equation} 
the use of auxiliary outcomes does not lead to an increase in the expected utility compared to the optimal decision functions based only on primary outcomes 
(Proposition~\ref{th:no_free_lunch}). Note that the solution of \eqref{eq:max_1} could be a randomized test. In this case, the decision function $\varphi$ would map the data into the interval $[0,1]$~\citep{lehmann:2006}. Randomized tests are rarely used in practice, but they can be helpful to compare  $\varphi_{Y,S}$ and $\varphi_Y$. In Proposition~\ref{th:no_free_lunch} below, we consider randomized tests, and again $K=M=1$. The proposition considers two sets of random variables ($S$ and $Y$). The class of possible $(S,Y)$ distributions is indexed by $\Theta$, 
and the null hypothesis $H_0$ is a subset of  $Y$ marginal distributions, i.e., a subset of $\Theta$. For simplicity, we consider a single-arm study. 
\begin{prop}\label{th:no_free_lunch}
The set $\argmax_{\varphi \in \varPhi_{\alpha,h} }
 \mathbb E^\pi[ \varphi(\Data) ]$
 contains at least one decision function in $\varPhi'_{\alpha,h}$ 
 \end{prop}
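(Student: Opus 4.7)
My plan is to start from any maximizer $\varphi^{\star} \in \argmax_{\varphi \in \varPhi_{\alpha,h}} \mathbb E_{\pi}[\varphi(\Data)]$ and explicitly construct a second test $\tilde\varphi$ depending only on $Y$ that (i) attains the same Bayesian expected power and (ii) still satisfies the size constraint, so that $\tilde\varphi \in \varPhi'_{\alpha,h}$ also belongs to the argmax set. The natural candidate is the prior-predictive conditional expectation
\[
\tilde\varphi(y) \;=\; \mathbb E_{\pi}\!\bigl[\varphi^{\star}(Y,S) \,\big|\, Y=y\bigr] \;=\; \int \varphi^{\star}(y,s)\, p_{\pi}(s \mid y)\, ds,
\]
where $p_{\pi}(y,s) = \int p_{\theta}(y,s)\, d\pi(\theta)$ is the prior-predictive joint and $p_{\pi}(s \mid y) = p_{\pi}(y,s)/p_{\pi}(y)$. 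Being a conditional expectation of a $[0,1]$-valued quantity, $\tilde\varphi$ is $Y$-measurable and takes values in $[0,1]$, so it is a legitimate randomized test based on the primary outcome alone.

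Next I would verify the two required properties. The BEP is preserved by the tower property, $\mathbb E_{\pi}[\tilde\varphi(Y)] = \mathbb E_{\pi}\!\bigl[\mathbb E_{\pi}[\varphi^{\star}(Y,S)\mid Y]\bigr] = \mathbb E_{\pi}[\varphi^{\star}(Y,S)]$, which by assumption equals the supremum over $\varPhi_{\alpha,h}$. For the size constraint, fix any $\theta \in H_{0}$ and expand
\[
\mathbb E_{\theta}[\tilde\varphi(Y)] \;=\; \int \varphi^{\star}(y,s)\, p_{\pi}(s \mid y)\, p^{Y}_{\theta}(y)\, ds\, dy,
\]
so the right-hand side is the expectation of $\varphi^{\star}$ under the \emph{synthetic} joint law whose $Y$-marginal is $p^{Y}_{\theta}$ and whose $S\mid Y$ conditional is the prior-predictive $p_{\pi}(\cdot \mid y)$. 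The key step is to realize this synthetic joint as $p_{\theta'}$ for some $\theta' \in H_{0}$, from which $\mathbb E_{\theta}[\tilde\varphi(Y)] = \mathbb E_{\theta'}[\varphi^{\star}] \leq \alpha$ follows directly from $\varphi^{\star} \in \varPhi_{\alpha,h}$.

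The main obstacle is precisely this identification, and it rests on the structural hypothesis of the statement: since ``the whole class of $(S,Y)$ distributions is indexed by $\Theta$'' and $H_{0}$ is ``an arbitrary subclass of $Y$ distributions'', membership in $H_{0}$ depends on $\theta$ only through the $Y$-marginal, and $\Theta$ is closed under arbitrary substitution of the $S \mid Y$ conditional. Once that closure is spelled out, the $\theta'$ above exists and the size transfer is immediate, so $\tilde\varphi \in \varPhi'_{\alpha,h}$ and $\mathbb E_{\pi}[\tilde\varphi(Y)] = \max_{\varphi \in \varPhi_{\alpha,h}} \mathbb E_{\pi}[\varphi(\Data)]$, proving the proposition. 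Absent this closure, the swap of $p_{\theta}(s\mid y)$ by $p_{\pi}(s\mid y)$ need not preserve the null size and the reduction to a $Y$-only test would generally fail; so the whole proof hinges on making that closure precise.
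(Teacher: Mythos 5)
Your proposal is correct and follows essentially the same route as the paper: you define the marginalized test via the prior-predictive conditional $\pi(s\mid y)$, preserve the Bayesian expected power by the tower property, and transfer the size bound by recognizing $\pi(s\mid y)\,p_{\tilde\theta}(y)$ as a joint law in $H_0$, exactly as the paper does. Your explicit articulation of the closure property of $\Theta$ under substitution of the $S\mid Y$ conditional is in fact a slightly more careful statement of the step the paper handles in one line.
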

The proposition states that there exists at least one $\alpha$-level test $\varphi\in \varPhi_{\alpha,h}$ that maximizes the BEP, expression \eqref{eq:max_1}, without leveraging auxiliary information $S$ (i.e., $\varphi \in \varPhi'_{\alpha,h}$).  The proof is provided in the Supplementary Materials. 
Proposition~\ref{th:no_free_lunch} indicates that for some combinations of prior model $\pi$, regulatory constraints $h(a,\btheta)$, and utility function $u$, the auxiliary data can not improve the expected utility.   But for other combinations of  $\pi$,  constraints $h(a,\btheta)$, and utility criteria $u$,  the improvements are substantial.   In particular, in \secname{s}~\ref{sec:weighted_bonferroni} and~\ref{sec:sequential}, we will provide realistic examples in which including auxiliary data increases expected utility and results in substantial improvements in relevant operating characteristics.

\section{Approximation of the optimal solution}\label{sec:approximation}
Solving the constraint maximization problem~\eqref{eq:maximization} is often unfeasible as it requires a maximization over the space of functions $\varPhi_{h,\alpha}$, which only includes candidate functions $\varphi \in \varPhi$ that satisfy the inequality $\mathbb E_{\btheta}[h(\varphi(\Data), \btheta)] \leq \alpha$ for all $\boldsymbol{\theta} \in \Theta.$   For practical implementations, we propose to approximate the maximization problem~\eqref{eq:maximization} by focusing on a convenient subset of functions $\varPhi_{\mathcal B} \subset \varPhi$.
 
{\it Data summaries $(W)$.} 
A first step to define  $\varPhi_{\mathcal B}$ is the inclusion in this subset of  decision functions $\varphi : {\tt data} \rightarrow \mathcal{A}$ that map a vector of data summaries $W(\Data) \in \mathbb R^v$, such as estimates of the treatment effects on primary and auxiliary outcomes, into decisions, $\mathbf{a}\in\mathcal{A}$. If $W(\Data)$ is a sufficient statistic for $p_{\btheta}$, there is no loss of information, while informative data summaries $W(\Data)$ can be the basis for nearly optimal decisions in $\varPhi_{\alpha,h} $with minimal reductions of the expected utility. Low-dimensional summaries also simplify the interpretation of candidate decision functions. If $W(\Data)$ includes treatment effect estimates, say the difference of the mean primary and auxiliary outcomes in the experimental and SOC arms, then each function in $\varphi \in \varPhi_{\mathcal B}$  translates these estimates into decisions. In the next sections, we further discuss the interpretability of decision functions through examples. The vector $W(\Data)$ might include summary statistics that are commonly used for data analysis, such as Z-statistics or p-values (see Section~\ref{sec:weighted_bonferroni}), but also Bayesian summaries of a posterior distribution. For example, the posterior probability of positive treatment effects  or the predicted probability that a trial will report positive findings can be used (see Section~\ref{sec:sequential}).

{\it Parametric approximations ($\varPhi_{\mathcal B} \subset \varPhi$).} 
We further simplify the problem and restrict our selection of decision functions to parametric maps $\varphi_\beta : {\tt data} \rightarrow \mathcal{A}$ mediated by $W(\Data)$. The candidate decision functions become $\varPhi_{\mathcal B} = \{ \varphi_{\bbeta} (\cdot); \bbeta \in \mathbb R^q \}$. These functions transform $W(\Data)$ into decisions and are indexed by  $\bbeta \in \mathbb R^q$.  The subset $\varPhi_{\mathcal B}\subset \varPhi$ can achieve a sufficient degree of flexibility---a modest loss of utility compared to the optimization in \eqref{eq:maximization}---through (i) an appropriate choice of the family of approximating functions and (ii) the number of parameters $q$. For example, for a single decision $M=K=1$, and a single $W(\Data) \in \mathbb R$ data summary, we can use the sign of a $q$-degree polynomial to make a decision, say testing a generic $H_0$. In this example, polynomials can approximate a generic  map $W(\Data) \rightarrow \{-1,+1\}$. 

{\it Regulatory constraints on parametric decision functions.} 
Once $W(\Data)$ and $\varPhi_{\mathcal B}$ have been specified, we define $\mathcal B_{\alpha, h} = \{ \bbeta \in \mathbb R^q : \mathbb E_{\btheta}[h(\varphi_{\bbeta} (\Data), \btheta)] \leq \alpha, ~ \forall \btheta \in \Theta \}$ as the subset of parametric functions that satisfy the regulatory constraints. Next, we select $ \bbeta_{ Y, S} $ in $ \mathcal B_{\alpha, h}$ that maximizes the expected utility, 
\begin{equation}
	 \bbeta_{ Y, S} \in \argmax_{\bbeta \in \mathcal{B}_{\alpha,h} } \mathbb E^\pi[u(\varphi_{\bbeta}(\Data),\btheta)],
 \label{eq:par_maximization}
\end{equation}
and use $ \varphi_{ \bbeta_{ Y, S} } (\cdot)$ as approximate solution of \eqref{eq:maximization}. 

To evaluate if a candidate $\bbeta $ satisfies regulatory constraints, in the following sections, we use large sample approximations of the expectations
$\mathbb E_{\btheta}[h(\varphi_{\bbeta} (\Data), \btheta)]$. Leveraging results from the statistical literature for some choices of $W(\Data)$ and $\varPhi_{\mathcal B}$ is particularly convenient. In fact, asymptotic results can reduce the computational effort required to select an (approximately) optimal design that the regulator can accept. We also discuss a bootstrap approach to estimate  $\mathbb E_{\btheta}[h(\varphi_{\bbeta} (\Data), \btheta)]$, and ultimately to comply with the regulator requests, especially in settings with a small sample size (see Section~\ref{sec:boot_calibration}).

In the next two sections, we consider two common decision problems in clinical trials, discuss implementations of the outlined approach, and evaluate the operating characteristics produced by the approximate solution in expression \eqref{eq:par_maximization}.

\section{The use of auxiliary outcomes to test treatment effects in subgroups}\label{sec:weighted_bonferroni}

{\it Multiple hypothesis testing. } 
We consider a RCT with $K>1$ pre-specified non-overlapping biomarker subgroups. In each subgroup $k=1, \ldots, K,$ the aim is to test if the experimental treatment has positive effects on the primary outcomes, controlling the FWER at level $\alpha$. In this section $\varphi(\Data)=(\varphi_1(\Data),\ldots,\varphi_K(\Data))$ and $M=K$. Recall that the FWER is the probability of reporting  one or more false positive results, and can be expressed as $\mbox{pr}_\theta \left(\cup_{k: \gamma_k \leq 0}  \{\varphi_{k}( \mathbf{D}) =1\}\right) = \mathbb E_{\btheta}[h(\varphi(\Data), \btheta ) ]$ with the  function $h(\varphi(\Data), \btheta ) = 1 \{ \sum_{k=1}^K 1\{ \gamma_k \leq 0\} \varphi_{k}(\Data) > 0  \}.$  The FWER control  at level $\alpha$ across all $\btheta \in \Theta$  is a standard regulatory constraint~\citep[e.g.,][]{dmitrienko:2010}.

{\it Utility function.}
An interpretable utility function $u$ for the  testing problem that we described includes a unitary reward for each subgroup in which  positive treatment effects are correctly identified, and a penalty $\lambda \geq 0$ for each false positive result, 
\begin{equation}
u(\varphi(\Data),\btheta) = \sum_{k=1}^K \varphi_k(\Data)1\{ \gamma_k > 0 \} -\lambda
\sum_{k=1}^K \varphi_k(\Data)1\{ \gamma_k \leq 0 \}.
\label{eq:utility_ex1}
\end{equation}

{\it Hypothesis testing procedures.} 
Several methods  for testing multiple hypotheses $H_{0,k},$ $k=1,\ldots,K$, in clinical trials  are based on  subgroup-specific p-values $\text{\tt pv}_k = \text{\tt pv} (\DataP_k) $ ~\citep[e.g.,][]{dickhaus:2014}. Typically these p-values are exact or approximate tail probabilities $ \text{\tt pv}_k = p( U_{y,k} \geq u_{y,k} \mid H_{0,k})$. The statistic $U_{y,k}$ is selected to provide subgroup-specific evidence against $H_{0,k}$, and $u_{y,k}$ is its value computed at the end of the trial.

Weighted Bonferroni procedures cover a broad and widely used class of methods for hypothesis testing for controlling the FWER~\citep[e.g.,][]{roeder:2009, dobriban:2015}. A weighted Bonferroni procedure is specified through a set of non-negative weights $\boldsymbol{\omega} = (\omega_1,\ldots,\omega_K)$, one for each hypothesis $H_{0,1}, \cdots, H_{0,K}$, with $\sum_{k=1}^K \omega_k =1$, and decisions $\varphi_k(\DataP) = 1\{\text{\tt pv} (\DataP_k) \leq \omega_k \alpha \}$. In other words, a weighted Bonferroni procedure rejects $H_{0,k}$ when $\text{\tt pv}_k \leq \omega_k \alpha$. This approach coincides with the standard Bonferroni method when $\omega_k =1/K$ for $k=1,\ldots,K$. If the weights $\boldsymbol{\omega}$ are selected using information available before the onset of the RCT, and correlate with available evidence in favor $H_{0,k}$, then they can increase the expected number of true positive results compared to the standard Bonferroni approach, while still controlling the FWER at same level $\alpha$~\citep[][]{roeder:2009}. Some authors proposed data-driven weights~\citep[e.g.,][]{schuster:2004,kropf:2004,finos:2007}. 

In the remainder of this section, we discuss a modification of the weighted Bonferroni procedure based on the decision-theoretic framework outlined in \secname{s}~\ref{sec:notation} and~\ref{sec:approximation}.  In particular, we introduce   parametric  functions that  transform auxiliary data summaries into weights $\boldsymbol{\omega}$. The optimal weights are selected maximizing the expectation of the utility function~\eqref{eq:utility_ex1} with the constraint of a   FWER  below the desired $\alpha$ level.

To quantify the effects on the auxiliary outcome, we use the mean difference $\overline{S}_k = \overline{S}_{k,1} - \overline{S}_{k,0}$, where $\overline{S}_{k,c}$ indicates the subgroup-specific mean of the auxiliary outcome in arm $c.$  Large values of $\overline{S}_k$ indicate relevant effects on the auxiliary outcome. Here, we focus on contexts where investigators expect a positive association between effects on auxiliary and primary outcomes.  The data summaries $W(\Data) = (\{\text{\tt pv}_k\}, \{\overline{S}_k\})$, are convenient test statistics and treatment effect estimates commonly used in clinical trials. As described in Section~\ref{sec:approximation}, we specify a family of parametric decision functions $\varPhi_{\mathcal B}: \Data\rightarrow \mathcal{A}$ that depend on the data $\Data$ through the summaries $W(\Data) = (\{\text{\tt pv}_k\}, \{\overline{S}_k\})$, 
\begin{equation}\label{eq:procedure1}
	\varphi_{\bbeta}(\Data) = \Big(\varphi_{\bbeta,1}(\Data), \ldots, \varphi_{\bbeta,K}(\Data)\Big). \end{equation}
 In particular, $\varphi_{\bbeta,k}(\Data) = 1\{ \text{\tt pv}_k \leq \alpha \omega_k(\bbeta, \overline{S}\}) \}$, for $k=1, \ldots K$, with  $\overline{S} = (\overline{S}_1, \ldots, \overline{S}_K)$ and $ \bbeta \in \mathbb R^K.$ We define the weights $\boldsymbol{\omega} $ using the auxiliary outcomes, 
\begin{equation}
\omega_k(\bbeta, \overline{S}) = \frac{\exp\{ \beta_k \overline{S}_{k}\}}{\sum_{\ell=1}^K \exp\{\beta_\ell \overline{S}_\ell \}}, \mbox{ for } k=1,\ldots,K.
\label{eq:procedure1w}
\end{equation}
The decision function $\varphi_{\bbeta,k}(\Data)$ translates both primary and auxiliary outcomes into decisions. Therefore, two hypothetical replicates of the same trial with the same p-values ${\tt pv}_k, \; k=1,\ldots,K$, and differences in the estimates of the auxiliary treatment effects  $\overline{S}_k, \; k=1,\ldots,K$, can lead to different decisions.

To compute the p-values, we use subgroup-specific estimates of the treatment effects on the primary outcomes $\overline{Y}_{k}= \overline{Y}_{k,1} - \overline{Y}_{k,0}$, where $\overline{Y}_{k,c}$ indicates the subgroup-specific mean of the primary outcome in arm $c$, and use $\text{\tt pv}_k = 1- \Phi( Z_k )$, where $Z_k =\overline{Y}_{k}/\{\widehat{\mbox{Var}}(\overline{Y}_{k})^{1/2}\}$. Here $\widehat{\mbox{Var}}(\overline{Y}_{k})$ is a consistent estimate of the variance of $\overline{Y}_{k}$, and $\Phi(\cdot)$ is the cumulative distribution function of a standard normal random variable. Under ${H_{0,k}},$ the random variables $Z_{k}$ and $\text{\tt pv}_k$ converge to standard normal and uniform distributions as $N_k$ increases. The set $\varPhi_{\mathcal B}$ can be further restricted to parameter configurations with $\beta_1=\beta_2=\ldots=\beta_K$. We will consider this restricted version of $\varPhi_{\mathcal B}$ later in our simulation study.

Proposition~\ref{prop:fwer_asymp} illustrates that the outlined weighted Bonferroni procedure with data-dependent weights $\boldsymbol{\omega}$ defined in~\eqref{eq:procedure1w} controls the FWER asymptotically at level $\alpha$.

\begin{prop}	\label{prop:fwer_asymp}
Assume that for each $\btheta \in \Theta$ the means $\mathbb E_\theta( \vert S_{k,i} \vert \mid C_{k,i}=c) $ and variances $\mbox{Var}_\theta(Y_{k,i} \mid C_{k,i}=c) $ are finite for all $k=1, \ldots, K$ and $c=0,1$. 
Consider a sequence of clinical trials with increasing sample sizes such that $N_k \rightarrow \infty$ for $k=1, \ldots, K$. Then, for any $\bbeta \in \mathbb R^K$, the decision function $\varphi_{\bbeta} : \mathcal{ D} \rightarrow \{0,1\}^K$ 
 in \eqref{eq:procedure1} with weights defined in \eqref{eq:procedure1w} controls asymptotically the FWER at level $\alpha$,
$$\lim_{N\rightarrow \infty} \mathbb E_{\btheta}[h(\varphi_{\beta}(\Data), \btheta)] \leq \alpha,\;\;\;\;\;\;\;\;\;\;\;\;\; \forall \theta\in\Theta,$$
 where $h(\varphi_\beta(\Data), \btheta ) = 1\left \{ \sum_{k=1}^K \varphi_{\beta,k}(\Data) 1 \{ \gamma_k \leq 0\} \geq 1 \right \}$.
 \end{prop}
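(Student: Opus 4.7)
The plan is to exploit the fact that the weights $\omega_k(\bbeta, \overline{S})$ depend on the data only through the auxiliary summaries $\overline{S}$, which, by the law of large numbers, converge in probability to deterministic limits. As a result, the data-dependent weights behave asymptotically like pre-specified weights summing to one, and the classical argument for the weighted Bonferroni FWER bound applies in the limit.

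First, I would fix an arbitrary $\btheta \in \Theta$. Under the assumed finiteness of $\mathbb{E}_{\btheta}(|S_{k,i}| \mid C_{k,i}=c)$ and the randomization scheme (so the proportion of patients in each arm converges to a strictly positive constant), the weak law of large numbers gives $\overline{S}_k \overset{p}{\to} \mu_{S,k}(\btheta) := \mathbb{E}_{\btheta}[S_{k,i}\mid C_{k,i}=1]-\mathbb{E}_{\btheta}[S_{k,i}\mid C_{k,i}=0]$ as $N_k\to\infty$. Since the softmax map in \eqref{eq:procedure1w} is continuous, the continuous mapping theorem yields $\omega_k(\bbeta,\overline{S}) \overset{p}{\to} \omega_k^{\star}(\btheta)$ where $\omega_k^{\star}(\btheta)$ is deterministic and $\sum_{k=1}^K \omega_k^{\star}(\btheta)=1$.

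Second, I would analyze the p-values on the true null set $\mathcal{K}_0(\btheta)=\{k:\gamma_k\le 0\}$. The finite-variance assumption and consistency of $\widehat{\mbox{Var}}(\overline{Y}_k)$ imply, via the central limit theorem and Slutsky's theorem, that when $\gamma_k=0$ the statistic $Z_k$ converges in distribution to a standard normal, so $\text{\tt pv}_k \overset{d}{\to} U_k\sim \mbox{Unif}(0,1)$; when $\gamma_k<0$ strictly, $Z_k \overset{p}{\to}-\infty$ so $\text{\tt pv}_k\overset{p}{\to} 1$. Because $\omega_k(\bbeta,\overline{S})$ converges in probability to the \emph{constant} $\omega_k^{\star}(\btheta)$, Slutsky's theorem gives the joint convergence $(\text{\tt pv}_k,\alpha\,\omega_k(\bbeta,\overline{S})) \overset{d}{\to} (\text{\tt pv}_k^{\star},\alpha\,\omega_k^{\star}(\btheta))$, and passing to the portmanteau lemma (noting that the limit distribution of $\text{\tt pv}_k^{\star}$ is continuous so the boundary event has probability zero) yields
\[
\limsup_{N_k\to\infty} P_{\btheta}\bigl(\text{\tt pv}_k\le \alpha\,\omega_k(\bbeta,\overline{S})\bigr) \le \alpha\,\omega_k^{\star}(\btheta),\qquad k\in\mathcal{K}_0(\btheta),
\]
with the bound being $0$ when $\gamma_k<0$.

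Finally, the union bound concludes the argument:
\[
\mathbb{E}_{\btheta}[h(\varphi_{\bbeta}(\Data),\btheta)] = P_{\btheta}\!\left(\bigcup_{k\in\mathcal{K}_0(\btheta)} \{\text{\tt pv}_k\le \alpha\,\omega_k(\bbeta,\overline{S})\}\right) \le \sum_{k\in\mathcal{K}_0(\btheta)} P_{\btheta}\bigl(\text{\tt pv}_k\le \alpha\,\omega_k(\bbeta,\overline{S})\bigr),
\]
and passing to the limit gives $\limsup \text{FWER} \le \alpha\sum_{k\in\mathcal{K}_0(\btheta)} \omega_k^{\star}(\btheta) \le \alpha$. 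The main subtlety, and the only place that requires care, is the joint-distribution step: $\overline{S}_k$ and $Z_k$ are generally correlated across patients within group $k$, so the FWER bound cannot be obtained by a marginal-independence argument. Invoking Slutsky to replace the random weight by its constant limit is exactly what bypasses this obstacle, and this is why the assumptions boil down to finite first moments for $S$ and finite variances for $Y$.
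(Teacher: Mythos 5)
Your proof is correct and follows essentially the same route as the paper's: both replace the data-dependent softmax weights by their law-of-large-numbers limits, use the asymptotic (super-)uniformity of the null p-values, and close with a union/Markov bound using $\sum_k \omega_k^{\star} \le 1$. The only difference is cosmetic: where the paper appeals to almost-sure convergence of the indicators plus uniform integrability to pass to the limit, you handle the data-dependent threshold via Slutsky and the portmanteau lemma, which is a somewhat more explicit justification of the same step.
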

The proof is provided in the Supplementary Material. The asymptotic result in proposition~\ref{prop:fwer_asymp} suggests that the control of false positives approximately matches the target $\alpha$-level. This justifies the selection of $\bbeta$---to optimize the expected utility---within a subset of designs identified using asymptotic considerations that approximately control the FWER  below the $\alpha$ level. In scenarios with small sample sizes, where asymptotic approximations may be inadequate, we implement a Bootstrap calibration of the optimal decision function (see \secname~\ref{sec:boot_calibration}) to achieve nominal false-positive control.

{\it Monte Carlo optimization approach.} 
To select a parameter $\bbeta \in \mathbb R^K$ that maximizes the utility (\ref{eq:utility_ex1}) in expectation, we use the following approach: 	
\begin{enumerate}
\item Generate independent replicates; $\btheta^{(r)} \sim \pi$ and $\Data^{(r)} \mid \btheta^{(r)} \sim p_{\btheta^{(r)}} $ for $r=1, \ldots, R.$
\item Define the empirical estimate $\widehat{U(\bbeta)} = R^{-1} \sum_{r=1} ^R u( \varphi_{\bbeta}(\Data^{(r)}), \btheta^{(r)})$ of ${\mathbb E}_\pi[u(\varphi_\beta(\Data), \btheta)].$
\item Optimize $\widehat{U(\bbeta)}$ over $\bbeta \in \mathbb R^K$, e.g., using simulated annealing~\citep{belisle:1992} or  a grid search. 
\end{enumerate}
If the utility function $u(\varphi(\Data),\btheta)$ includes large $\lambda$ penalties for false positive results,  then $\varphi^*$, the unconstrained decision-theoretic optimum, may satisfy the regulator's request of an FWER below a target, say $0.05$. In this case, the user can explore different $\alpha$ values between zero and the regulatory threshold to select the decision functions.

\subsection{Results}\label{sec:multipleTestRes}

\textit{PARP inhibitor in patients with breast cancer.} 
We consider a simulation study  that presents similarities with our breast cancer research collaboration~\citep{tung:2020}. The simulation focuses on patients with homologous recombination (HR) defects, a group with a median PFS under 24 months. PARP inhibitors, which target DNA double-strand breaks, have improved PFS, objective response rate (ORR), and quality of life in HR-related cancers, though their efficacy varies by HR alteration. Given strong evidence that ORR is an early surrogate for PFS~\citep{burzykowski:2008}, our simulations consider a clinical trial  that evaluates ORR (auxiliary outcome) at 4 months and PFS (primary outcome) at 18 months, with subgroups defined by gPALB2 and sBRCA1/2a mutations.

\noindent\textit{Simulation model.} 
We conduct a simulation study with $15$ scenarios, varying the degree of    concordance between the treatment effects on  primary and auxiliary outcomes. For each scenario, we generate $5000$ RCTs with binary primary $Y_{k,i}$ and auxiliary $S_{k,i}$ outcomes for $k=1,2 $  subgroups, 
with overall sample size $N=200$. In each of these trials, the probabilities of enrolling patients from the first or second subgroup are $0.6$ and $0.4,$ respectively, with expected sample sizes of $120$ and $80.$

 We simulate outcome data $(Y_{k,i}, S_{k,i})$ for the experimental $(c=1)$ and SOC arms $(c=0)$ conditional on treatment assignments $ C_{k,i} $, for $k=1,\ldots,K,$ and $ i=1, \ldots, N_k,$ using a copula model, varying the marginal outcome probabilities, $\pr(Y_{k,i}=1 \mid C_{k,i}=c)$ and $\pr(S_{k,i}=1 \mid C_{k,i}=c), \; k=1,2, c=0,1,$ and the odds-ratios 
 $$R_{k,c} = \frac{\pr(Y_{k,i} =1, S_{k,i} =1 \mid C_{k,i} = c) ~ \pr(Y_{k,i} =0, S_{k,i} =0\mid C_{k,i} = c) }{\pr(Y_{k,i}=1,S_{k,i}=0\mid C_{k,i} = c)~ \pr(Y_{k,i}=0,S_{k,i} =1\mid C_{k,i} = c) }, 1\leq k\leq K.$$ 

\textit{Simulation scenarios.}
To prepare a study protocol, practical considerations such as expected enrollment and dropout rates, need to be combined with realistic simulations studies. In our setting, investigators expect a positive association between effects on the auxiliary and primary outcomes, and simulations should explore violations of this assumption to evaluate whether relevant operating characteristics are severely compromised or not.
Also, variations in the prior probability model and utility functions can be presented to the stakeholders; we will discuss this later in this section.
Here, we consider $15$ scenarios with different configurations of treatment effects and correlation between primary and auxiliary outcomes. We included five configurations for treatment effects:\\
1) there are no treatment effects for the primary and auxiliary outcomes in both subgroups;
2) there are no treatment effects for the primary outcome in both subgroups, 
but there is a positive treatment effect for the auxiliary outcome in subgroup $k=1$;
3) there is a positive treatment effect for the primary outcome in subgroup $k=1$, 
but there are no treatment effects on the auxiliary outcomes;
4) there are positive treatment effects for both  primary and auxiliary outcomes in subgroup $k=1$;
5) there is a positive treatment effect on the primary outcomes in subgroup $k=1,$ 
but there is a negative treatment effect on the auxiliary outcomes in the same subgroup.

For each of these five configurations, we generate data with various degrees of correlation between the primary and auxiliary outcomes. We set the marginal probability for the control arms to $\pr(Y_{k,i} =1 \mid C_{k,i} =0) = 0.2$, and $\pr(S_{k,i} =1 \mid C_{k,i} =0) = 0.5$ for $k=1,2.$ If there is a positive treatment effect for the primary and/or auxiliary outcome, then $\pr(Y_{k,i} =1 \mid C_{k,i} =1) = 0.4$ and $\pr(S_{k,i} = 1\mid C_{k,i} =1) = 0.75,$  otherwise, the probabilities are identical to the ones for the SOC arm. For configuration 5, with a negative effect on the auxiliary outcomes (subgroup $k=1$), we set for
$\pr(S_{k,i} =1 \mid C_{k,i} =0) = 0.75$ and $\pr(S_i =1 \mid C_{k,i} =1) = 0.5.$ Finally, in our simulation scenarios, the odds-ratios are $R_{k,c}=1, 2,$ or $10$ for subgroups $k=1,2$ and arms $c=0,1.$

 \textit{Hypothesis testing, utility, and decision function.} 
 The treatment effect $\gamma_k$, $k=1,\ldots,K,$ is the mean difference between the primary outcomes under the experimental and control therapies, and $H_{0,k}: \gamma_{k} \leq 0.$ We use the parametric decision function in~\eqref{eq:procedure1} and specify the weights $w_k(\bbeta,\overline{S}) \propto \exp\{ \beta \overline{S}_k \} $ for $k=1,\ldots,K$, with $\beta \geq 0$.

\textit{Joint Bayesian model.} 
We use a prior model to predict the data that the clinical trial will generate, including auxiliary and primary outcomes, and use these predictions to find the parameters of \eqref{eq:procedure1w} that optimize the utility~\eqref{eq:utility_ex1}. %Auxiliary and primary outcomes are modeled jointly.  
Prior information from previous trials or meta-analyses can inform  the design of the trial   \citep[see][for an example]{spring:2020}. Some aspects of the model, such as the relationship between treatment effects on primary and auxiliary outcomes, might present substantial uncertainty, and this should be incorporated into the prior, for example using mixtures~\citep{schmidli:2014}. 

To simplify exposition, we use the same prior model here and in Section~\ref{sec:sequential}. Specifically, we use a Bayesian logistic model with correlated outcomes: 
\begin{align}
\pr(Y_{k,i} =1\mid C_{k,i}, \epsilon_{k,i}, \zeta^Y_{0,k} , \zeta^Y_{1,k}) &= F(\zeta^Y_{0,k} + \zeta^Y_{1,k} C_{k,i} + \epsilon_{k,i}), \nonumber \\
\pr(S_{k,i} =1\mid C_{k,i}, \epsilon_{k,i},\zeta^S_{0,k}, \zeta^S_{1,k} ) &= F(\zeta^S_{0,k} + \zeta^S_{1,k} C_{k,i} + \epsilon_{k,i}), 
\label{eq:multi_logit}
\end{align}
where $F(\cdot)$ is the logistic distribution function, and the independent Gaussian terms $\epsilon_{k,i} \buildrel{\mbox{\tiny iid}} \over \sim N(0, \sigma^2 _{k} )$ modulate the correlation between $Y_{k,i}$ and $S_{k,i}$. An advantage of model~\eqref{eq:multi_logit}, compared to alternative joint models for primary and auxiliary outcomes, is that it requires a single parameter $\sigma^2 _{k}$ to induce correlation. This simplifies prior elicitation. We use independent normal priors for the intercepts $\zeta^Y_{0,k}$ and $\zeta^S_{0,k}.$ Their means and standard deviations, and the parameters $\sigma^2 _{k}$, can be elicited from historical data.  
The prior on $\sigma^2_k$ affects the optimal solutions and their operating characteristics, because it alters the correlations and variances of the primary and auxiliary outcomes, as well as the relationship between the corresponding treatment effects. The value of $\sigma^2_k$ can be elicited by matching the degree of correlation between primary and auxiliary outcomes with published estimates. Alternatively, an inverse-gamma distribution can be used as a prior on $\sigma_k^2$, allowing for additional uncertainty. The prior for the treatment effect $\zeta^S_{1,k}$ on the auxiliary outcome is a two-component mixture:
$$
\xi 
1\{
\zeta^S_{1,k} =0\}
 + (1-\xi) N(m_{S,k}, \sigma^2_{S,k}), 
$$
where $\xi$ is the prior probability that the experimental treatment has no effects. We then specify the prior for the treatment effect $\zeta^Y_{1,k} $ on the primary outcome conditionally on the effect on the auxiliary outcome $\zeta^S_{1,k}$:
$$
 \zeta^Y_{1,k} = c_{Y,k} \zeta^S_{1,k} ~~ \text{ and } ~~
c_{Y,k} \sim \mbox{Beta}( v_{k}, o_{k} ),
$$
where $\mbox{Beta}( a, b )$ indicates the beta distribution with mean $a/(a+b)$ and variance $ab/[(a+b)^2(a+b+1)]$. In other words, the treatment effect on the primary outcome $\zeta^Y_{1,k}$ {\it a priori} correlates with the effect on the auxiliary outcome $\zeta^S_{1,k}$. The values of the hyperparameters are reported in Table~\ref{tab:s1} of the Supplementary Material, together with summaries of trials generated using the prior model. Summaries of the prior, such as the average correlation between primary and auxiliary outcomes within a subgroup of interest, are important for assessing whether the prior is representative---with adequate uncertainty---of historical data from completed trials or electronic health records. We also discuss a variation  of the outlined prior in the Supplementary Material (see \figurename~\ref{fig:s3} and \tablename~\ref{tab:s5}).

 \textit{Testing procedures.}
 We compare the operating characteristics of the proposed procedure (indicated as {\it Auxiliary-Augmented} in \tablename~\ref{tab:sim1}) with (i) {\it Bonferroni} procedure, (ii) {\it Holm} procedure~\citep{holm:1979}, (iii) a testing procedure that only uses the auxiliary outcomes to evaluate the presence of treatment effects with Bonferroni corrections ({\it Auxiliary-Only} in \tablename~\ref{tab:sim1}), and (iv) an implementation of the {\it Frequentist Assisted by Bayes (FAB)} testing procedure proposed in \citet{hoff:2021}. For all the procedures, we target an FWER $\leq 0.05.$ Optimal parameters for the {\it Auxiliary-Augmented} procedure are illustrated in Supplementary \figurename~\ref{fig:s1}. We used $\lambda = 0.5,$ which implies a positive utility if there is at least one true positive result. Simulations with alternative values of $\lambda$ are discussed later in this section. The {\it FAB} procedure defines test statistics and p-values that incorporate indirect or prior information and uses a {\it linking model}; see \citet{hoff:2021} for details. In our implementation of {\it FAB}, we fit logistic regression models for both the auxiliary and primary outcomes. We emphasize that our implementation is not consistent with the recommendations in \citet{hoff:2021} to include independent components of information.  Indeed, primary and auxiliary outcomes correlate. 

\textit{Comparative simulation study.} \tablename~\ref{tab:sim1} shows for each scenario the proportion of times the null hypotheses $H_{0,k}$ were rejected across $5000$ simulations. Supplementary \tablename~\ref{tab:s2} summarizes the FWER.

In scenario 1, without positive treatment effects, all procedures control the FWER at $5\% $ level. {\it FAB} is slightly more conservative with FWER of approximately $4\%$ for $R_{k,c} = 1,2,10.$ In scenario 2, where the auxiliary variable incorrectly indicates a positive treatment effect for group 1, the proportion of rejections of the null hypothesis $H_{0,1}$ is higher for {\it Auxiliary-Augmented} compared to {\it Bonferroni} ($0.039$, $0.036$ and $0.041$ compared to $0.029$, $0.025$, and $0.028$ for $R_{k,c} = 1,2,10$). However, {\it Auxiliary-Augmented} has an FWER close to the nominal level ($0.052$, $0.053$, and $0.059$ for $R_{k,c} = 1,2,$ and $10,$ respectively). Also, in this scenario, {\it FAB} is slightly more conservative than the other two procedures, with FWERs of $0.043$, $0.044$, and $0.044$ for $R_{k,c} = 1,2,$ and $10,$ respectively. As expected, using only the auxiliary outcomes as a surrogate for the primary outcome,  {\it Auxiliary-Only} greatly inflates the FWER ($0.825$, $0.825$, and $0.831$ for $R_{k,c} = 1,2,$ and $10,$ respectively) because in this scenario, there are positive effects of the treatment on the auxiliary outcome but no effects on the primary outcome.

In scenario 3, where the auxiliary outcomes incorrectly suggest no treatment effects in both subgroups, {\it Auxiliary-Augmented}, {\it Bonferroni}, and \textit{FAB} perform similarly. In scenario 4, where the auxiliary variable correctly indicates the presence of a treatment effect for group 1, the proportion of rejections of $H_{0,1}$ (true positive results) for {\it Auxiliary-Augmented} is approximately $0.73$ for $R_{k,c} = 1,2,10$ compared to approximately $0.68$ {\it Holm} and {\it Bonferroni}.  To further interpret this difference, we can estimate the sample size $N$ for a hypothetical trial using the \textit{Bonferroni} procedure that achieves a rejection rate of $H_{0,1}$ of $0.73$ under Scenario 4.  We estimated this total sample size to be approximately $230$, compared to $200$ with \textit{Auxiliary-Augmented}. {\it FAB} has a lower proportion of simulation replicates that reject $H_{0,1}$ than the remaining procedures (e.g, $0.66$ compared to $0.73$ with {\it Auxiliary-Augmented} for $R_{k,c} = 10$). In scenario 5, there is a positive treatment effect on the primary outcomes in subgroups $k=1,$ but a negative effect for the auxiliary outcome in this subgroup. In this scenario, as expected, leveraging information on auxiliary treatment effects (our {\it Auxiliary-Augmented} procedure) leads to a reduction in the proportion of rejections of $H_{0,1}$ compared to the {\it Holm}, {\it Bonferroni}, and {\it FAB}. For example, in the setting with $R_{k,c} =10$ {\it Auxiliary-Augmented} rejects $H_{0,1}$ with a proportion of approximately $0.57$ compared to $0.68$ for the {\it Holm}. 

\textit{Sensitivity Analyses.}
Comparing how the operating characteristics of alternative \textit{Auxiliary-Augmented} procedures vary when optimized under different values of $\lambda$ is an important sensitivity analysis. Frequentist constraints are built into the optimization, so are expected to be robust to the choice of $\lambda$, but other operating characteristics, such as power, may vary substantially. For the 15 simulation scenarios described in this section, we considered ten values of the parameter $\lambda\in \{0.1,0.2,0.3,\cdots, 1\}$, and computed the same operating characteristics as in  \tablename~\ref{tab:sim1}. These $\lambda$ values leads to  optimal values for $\beta$ between $14.07$ and $3.69.$  The resulting operating  characteristics are reported in \tablename~\ref{tab:s3} of the Supplementary Material.

The FWER for simulation scenarios 1 and 2,  with $R_{k,c} =1,2,10$, is between $0.051$ and $0.065$. In scenario 3, where the auxiliary outcomes incorrectly suggest no treatment effects in both subgroups, for increasing values of $\lambda,$ $H_{0,1}$ is rejected in a proportion of the simulations that varies between $0.63$ and $0.67$, for all $R_{k,c} =1,2,10.$ In scenario 4, where the auxiliary variable correctly indicates the presence of a treatment effect for group 1, $H_{0,1}$ is rejected in a proportion of the simulations between $0.73$ and $0.76.$

We observed larger variations of the operating characteristic in scenario 5, with a positive treatment effect on the primary outcomes in subgroup 1 but a negative effect on the auxiliary outcome in this subgroup. In this scenario, $H_{0,1}$ is rejected in a proportion of the simulations between $0.34$ to $0.59$ (\tablename~\ref{tab:s3}). As expected, with larger values of $\beta$ the frequency of false negative results increases.

Another important sensitivity analysis involves considering variations of the prior model. In \figurename~\ref{fig:s3} and \tablename~\ref{tab:s5} of the supplementary, we considered a variation of the prior model described in Section~\ref{sec:multipleTestRes}. Compared to Section~\ref{sec:multipleTestRes}, the prior changes the distribution of $c_{Y,k}$, for $k = 1, 2$, to a mixture of two components, a beta distribution with parameters $v_k = 6$ and $o_k = 1$ with probability $0.9$, and a point mass at zero with probability $0.1$. In this case, the optimal design is parametrized by $\beta_{Y,S}$ = $1.98$.

\textit{Simulation study with $K=6$ subgroups.} 
\tablename~\ref{tab:sim_k6} in the Supplementary Materials reports results of additional simulations with $K=6$ subgroups. We consider the same $15$ simulation scenarios described in this section. The data distribution for the first ($k=1$) subgroup is identical to the previous simulations, and the data for subgroups $k = 2, \ldots, 6$ are generated using the same models we used for the second subgroup.
We generate $5000$ RCTs for each scenario with sample size $N=600$. In each trial, the probability of enrolling a patient to the first subgroup is $0.25$, and $0.15$ for subgroups $2,\ldots,6$, with an expected sample size of $150$ for the first subgroup, and $90$ for subgroups $2,\ldots,6.$ Optimal parameters for the {\it Auxiliary-Augmented} procedure are illustrated in Supplementary \figurename~\ref{fig:s2}. We used  $\lambda = 1,$ which implies a positive utility if the number of true positive findings is larger than the number of false positive results.

In scenarios 1 and 2, all methods present FWER close to the targeted $0.05$, with {\it Auxiliary-Augmented} being slightly anti-conservative (FWER $\sim 6\%)$ and {\it FAB} slightly conservative (FWER $\sim 3\%).$ We note that {\it Auxiliary-Augmented} with the bootstrap calibration procedure described in \secname~\ref{sec:boot_calibration} controls the FWER at the nominal level. In scenario 3, where the auxiliary outcomes incorrectly suggest no treatment effects in all the subgroups, the performance of {\it Auxiliary-Augmented} is slightly worse than the other methods. For example, with $R_{k,c} = 2,$ {\it Auxiliary-Augmented} rejects $H_{0,1}$ in $56.5\%$ of the simulation replicates compared to $63.8\%$ with the {\it Holm} procedure. In scenario 4, where the auxiliary variable correctly suggests the presence of a treatment effect for group 1, we observe a higher proportion of $H_{0,1}$ rejections (true positives) with  {\it Auxiliary-Augmented} compared to {\it Holm}. For example, when $R_{k,c}=1$, {\it Auxiliary-Augmented} rejects $H_{0,1}$ approximately $80\%$ of the time, compared to $63\%$ for {\it Holm}. In scenario 5, where treatment effects in subgroups $k=1$ are discordant, the performance of  {\it Auxiliary-Augmented} deteriorates, as expected. The proportion of replicates with the rejection of $H_{0,1}$ is approximately $0.28$ for $R_{k,c} =1,2,10,$ compared to $0.62$ of {\it Holm}. 
\begin{table}[h!]
 \centering
 \resizebox{0.94\textwidth}{!}{\begin{tabular}{rcccccc} 
\toprule 
\multicolumn{7}{c}{Scenario 1}\\ 
\midrule 
& \multicolumn{2}{c}{$ R_{k,c}= 1$} & \multicolumn{2}{c}{$R_{k,c} = 2$} & \multicolumn{2}{c}{$R_{k,c} = 10$} \\ 
Method  & subgroup 1 & subgroup 2 & subgroup 1 & subgroup 2 & subgroup 1 & subgroup 2 \\ 
Auxiliary-Augmented &0.023 & 0.029 &0.030 & 0.030 &0.030 & 0.028 \\ 
Auxiliary-Augmented-B &0.023 & 0.028 &0.028 & 0.026 &0.025 & 0.024 \\ 
Bonferroni &0.024 & 0.028 &0.028 & 0.029 &0.026 & 0.023 \\ 
Holm &0.025 & 0.029 &0.029 & 0.029 &0.027 & 0.024 \\ 
FAB &0.021 & 0.022 &0.025 & 0.021 &0.024 & 0.019 \\ 
Auxiliary-Only &0.027 & 0.026 &0.024 & 0.028 &0.029 & 0.025 \\ 
\bottomrule 
\multicolumn{7}{c}{Scenario 2}\\ 
\midrule 
& \multicolumn{2}{c}{$ R_{k,c} = 1$} & \multicolumn{2}{c}{$ R_{k,c} = 2$} & \multicolumn{2}{c}{$ R_{k,c} = 10$} \\Method  & subgroup 1 & subgroup 2 & subgroup 1 & subgroup 2 & subgroup 1 & subgroup 2 \\ 
Auxiliary-Augmented &0.039 & 0.014 &0.036 & 0.017 &0.041 & 0.019 \\ 
Auxiliary-Augmented-B &0.038 & 0.015 &0.034 & 0.016 &0.035 & 0.016 \\ 
Bonferroni &0.029 & 0.025 &0.025 & 0.025 &0.028 & 0.026 \\ 
Holm &0.029 & 0.026 &0.026 & 0.026 &0.029 & 0.027 \\ 
FAB &0.025 & 0.018 &0.024 & 0.020 &0.023 & 0.022 \\ 
Auxiliary-Only &0.820 & 0.026 &0.820 & 0.029 &0.825 & 0.026 \\ 
\bottomrule 
\multicolumn{7}{c}{Scenario 3}\\ 
\midrule 
& \multicolumn{2}{c}{$ R_{k,c} = 1$} & \multicolumn{2}{c}{$R_{k,c} = 2$} & \multicolumn{2}{c}{$R_{k,c} = 10$} \\ 
Method  & subgroup 1 & subgroup 2 & subgroup 1 & subgroup 2 & subgroup 1 & subgroup 2 \\ 
Auxiliary-Augmented &0.673 & 0.026 &0.672 & 0.031 &0.666 & 0.036 \\ 
 Auxiliary-Augmented-B &0.663 & 0.026 &0.652 & 0.027 &0.630 & 0.031 \\ 
 Bonferroni &0.678 & 0.027 &0.675 & 0.029 &0.678 & 0.031 \\ 
Holm &0.682 & 0.041 &0.678 & 0.043 &0.683 & 0.050 \\ 
FAB &0.664 & 0.019 &0.663 & 0.022 &0.664 & 0.023 \\ 
Auxiliary-Only &0.024 & 0.030 &0.027 & 0.027 &0.028 & 0.028 \\ 
\bottomrule 
\multicolumn{7}{c}{Scenario 4}\\ 
\midrule 
& \multicolumn{2}{c}{$R_{k,c} = 1$} & \multicolumn{2}{c}{$R_{k,c} = 2$} & \multicolumn{2}{c}{$R_{k,c} = 10$} \\ 
Method  & subgroup 1 & subgroup 2 & subgroup 1 & subgroup 2 & subgroup 1 & subgroup 2 \\ 
Auxiliary-Augmented &0.731 & 0.014 &0.735 & 0.019 &0.731 & 0.016 \\ 
 Auxiliary-Augmented-B &0.726 & 0.012 &0.715 & 0.017 &0.701 & 0.015 \\ 
 Bonferroni &0.681 & 0.022 &0.677 & 0.031 &0.676 & 0.024 \\ 
Holm &0.683 & 0.039 &0.680 & 0.048 &0.679 & 0.045 \\ 
FAB &0.669 & 0.017 &0.662 & 0.024 &0.665 & 0.018 \\ 
Auxiliary-Only &0.825 & 0.029 &0.814 & 0.024 &0.812 & 0.025 \\ 
\bottomrule\multicolumn{7}{c}{Scenario 5}\\ 
\midrule 
& \multicolumn{2}{c}{$R_{k,c} = 1$} & \multicolumn{2}{c}{$R_{k,c} = 2$} & \multicolumn{2}{c}{$R_{k,c} = 10$} \\ 
Method  & subgroup 1 & subgroup 2 & subgroup 1 & subgroup 2 & subgroup 1 & subgroup 2 \\ 
Auxiliary-Augmented &0.579 & 0.037 &0.565 & 0.041 &0.569 & 0.043 \\ 
 Auxiliary-Augmented-B &0.574 & 0.035 &0.552 & 0.038 &0.538 & 0.037 \\ 
 Bonferroni &0.683 & 0.026 &0.673 & 0.029 &0.676 & 0.028 \\ 
Holm &0.686 & 0.040 &0.676 & 0.045 &0.678 & 0.044 \\ 
FAB &0.671 & 0.020 &0.660 & 0.019 &0.661 & 0.022 \\ 
Auxiliary-Only &0.000 & 0.029 &0.000 & 0.028 &0.000 & 0.026 \\ 
\bottomrule 
\end{tabular}}
 \caption{Proportion of times the null hypotheses of absent treatment effects on the primary outcome is rejected across $5000$ simulations. The Auxiliary-Augmented-B method refers to the bootstrap calibrated procedure described in \secname~\ref{sec:boot_calibration}.
 }
 \label{tab:sim1}
\end{table}

Study teams and biostatisticians can use the utility function~\eqref{eq:utility_ex1} with parameter $\lambda$ selected to ensure a positive utility when the number of true positive results exceeds the number of false positives. In our simulations, $\lambda = 0.5$ when $K=2$, and $\lambda =1$ when $K=6$. We tested the prior model described in Section~\ref{sec:multipleTestRes}, with hyperparameters listed in \tablename~\ref{tab:s1},  across a broad range of scenarios. As expected, substantial improvements in operating characteristics had been apparent in scenarios where treatment effects concentrate in a small fraction of subgroups. More generally, with these choices of utility criteria and modeling, when the likelihood of discordant treatment effects between primary and auxiliary outcomes is minimal, we expect moderate-to-large improvements of operating characteristics. However, if the likelihood of discordant effects is moderate to high, we recommend a comprehensive assessment of hypothetical scenarios, including potential discrepancies between the distribution of actual outcomes and the Bayesian model, based on simulations.

%\clearpage

\subsection{Bootstrap calibration of the decision function}\label{sec:boot_calibration}
Proposition~\ref{prop:fwer_asymp} states that for any $\boldsymbol{\beta} \in \mathbb R^K$ of the decision function, the {\it Auxiliary-Augmented} testing procedure controls the FWER asymptotically at level $\alpha$. For RCTs with small sample sizes and strong correlations between $\overline Y_k$ and $\overline S_k$, the finite sample FWER might not match  the nominal $\alpha$ level, see Supplementary \figurename~\ref{fig:boot} for an example. Intuitively, if the statistics $\overline S_k$ and $\overline Y_k$ present a positive correlation, then positive $\overline S_k$ values will be associated with small p-value ($\text{\tt pv}_k$) for the group $k$ and large weights $\omega_k(S_k, \beta)$. This mechanism can inflate the FWER.

With small sample sizes, we can estimate the FWER of the optimal solution via simulation by generating correlated p-values ($\text{\tt pv}_k$) and weights $\omega_k(S_k, \beta)$ in the absence of treatment effects on the primary outcome. This estimation-based (bootstrap) strategy can be used to correct for potential inflation of the FWER above the targeted $\alpha$ value.
 
Specifically, the optimal testing procedure is calibrated by estimating the threshold $\alpha'$, that matches the FWER of the adjusted decision functions $\varphi^{\text{\tt adj}}_{\bbeta,k}(\Data) =
1\{ \text{\tt pv}_k \leq \alpha^\prime \omega_k( \bbeta , \overline{S}) \}$ and the nominal $\alpha$-level. For simplicity, we use $\bbeta$ instead of $\bbeta_{ Y, S}$ for the parameter of the optimal parametric decision in expression \eqref{eq:par_maximization}:\\

\noindent Input:
\begin{enumerate}
\item[(i)] Auxiliary treatment effect estimates $\{ \widetilde S_k \}_{k=1}^K$. 
\item[(ii)] Estimated covariance matrices $\widetilde \Sigma_k$
of the random vectors $(\bar S_k, \bar Y_k)$ for $k=1, \ldots, L$.
\item[(iii)] Parameter $\bbeta$. 
\end{enumerate}
\noindent Procedure:
\begin{enumerate}
\item Generate $b=1, \ldots, B$ data summaries 
$(\overline Y^{(b)}_k,\overline S^{(b)}_k) \sim N((0,\widetilde S_k), \widetilde{\Sigma}_k)$.
\item For each $k=1,\ldots,K$ and $b=1,\ldots,B$,
compute the weights $\omega^{(b)}_k = \omega_k(\boldsymbol \beta,\overline S^{(b)})$ 
and the p-values $\text{\tt pv}_k^{(b)}$, 
which are functions of 
$\overline Y^{(b)}_k$ and $\overline S^{(b)}_k$. \item Set $\alpha^\prime = \inf \{ t \in [0,1] : \widehat {FWER} (t) \geq \alpha\}$, where 
$$
\widehat {FWER} (t) = \sum_{b=1}^B G^{(b)}(t) / B, ~ \text{ with } G^{(b)}(t) = 1-\prod_{k=1}^K1\left\{\text{\tt pv}_k^{(b)} \geq \omega^{(b)}_k t\right\}.
$$
\item Conduct the hypothesis tests using the actual data and 
$\varphi^{\text{\tt adj}}_{ \bbeta ,k}(\Data) = 1\{ \text{\tt pv}_k \leq \alpha^\prime \omega_k( \bbeta , \overline{S}) \}$, $k=1, \ldots K.$
\end{enumerate}
The input parameters $\{ \widetilde S_k ,\widetilde \Sigma_k \}_{k=1}^K$ can be set using any consistent estimator of the treatment effects and covariance matrices, such as maximum likelihood or Bayesian estimates. 
\figurename~\ref{fig:boot} shows that the outlined bootstrap calibration algorithm controls the FWER at the nominal $\alpha$-level. 
Moreover, for small odds ratios ($R_{k,c} = 1$) we observe minor reductions in power of approximately $0.5\%$; with $R_{k,c} = 10,$ the power reductions are approximately $ 3\%$ (see scenarios 3-4 in \tablename~\ref{tab:sim1}).

The described calibration was applied only during the analysis stage, and it is recommended when the correlation between primary and auxiliary outcomes is moderate-to-large. In our simulations, its inclusion in the optimizations of $\beta$ had a negligible impact. More generally, the need for calibration or adjustment depends on sample size, the class of parametric decision functions, and other trial features, and the procedure can be embedded within the optimization used to select the optimal parametrization.

\section{Auxiliary outcomes for interim decisions in RCTs}\label{sec:sequential}
In this section, we discuss a multi-stage design motivated by our work in newly diagnosed glioblastoma (nGBM) \citep{vanderbeek2018clinical, vanderbeek2019randomize}, an aggressive form of brain cancer. The SOC in nGBM is temozolomide in combination with radiation therapy (TMZ+RT), which was approved more than 15 years ago based on the results of the \textit{EORTC-NCIC CE.3} trial \citep{stupp:2005}. Clinical outcomes remain poor, with a median overall survival of less than $16$ months. All confirmatory phase III RCTs conducted since 2005 have failed to demonstrate OS improvements compared with TMZ+RT. Several primary and co-primary outcomes have been used in recent nGBM trials (e.g., ORR, OS, PFS, and OS at 24 months). Investigators in nGBM have highlighted the need for novel drug development strategies and effective trial designs that can rapidly discontinue evaluation of toxic or inferior experimental treatments without compromising power \citep[e.g.,][]{ventz:2021}. Several authors have proposed prediction procedures that trigger futility-stopping decisions, and some of the resulting approaches are currently used in clinical research. 

We hypothesize that combining information on primary and auxiliary outcomes can improve interim decision-making and expose fewer patients to experimental treatments without positive effects. Additionally, with improved interim predictions, the study's power can increase by reducing the likelihood of early futility stopping when the experimental treatment has a positive effect on the primary outcome. Recent contributions have shown that early auxiliary outcomes, which are predictive of the primary outcomes, can be useful to estimate the conditional power~\citep {li:2022,li:2023}. Here, we leverage primary and auxiliary outcomes to make early futility decisions during the trial in accordance with explicit utility criteria. We consider a group-sequential trial design \citep{demets:1994} that uses OS after 18 months of treatment (OS-18, $Y_i\in \{0,1\}$) as primary outcome, and ORR ($S_i\in \{0,1\}$) according to the iRANO criteria~\citep{wen:2010} as auxiliary outcome. 

{\it Study design.} 
The design has $T$ stages and evaluates efficacy in the overall population ($K = 1$), with a maximum of $2\times(T-1)$ interim decisions based on available evidence for futility or efficacy plus a potential efficacy test at the end of the trial (i.e., $M=2(T-1)+1$). Each IA can (i) reject the null hypothesis $H_0$ of no treatment effect on the primary outcome if there is sufficient early evidence and close the study, (ii) terminate the trial early for futility if additional data are unlikely to demonstrate positive effects of the experimental treatment on the primary outcomes, or (iii) continue the study. At the end of the $t$-th stage ($t=1,\ldots,T-1$), the design includes an efficacy analysis, followed by a futility analysis unless $H_0$ was rejected. The $t$-th analysis occurs after the primary outcomes of the first $n_t$ patients become available, exactly 18 months after the enrollment of the $n_t$-th patient, where  $n_1,\ldots, n_T, (n_t<n_{t+1})$ are fixed design parameters. Potentially more than $n_t$ auxiliary outcomes are available at IA $t$ and are used for interim decisions. Lastly, we use $m_t, t\leq T, m_t \geq n_{t}$ to indicate the number of enrollments before the $t$-th  analysis. 
 
{\it Frequentist constraint and utility function.} The type I error rate is controlled at level $\alpha$. We indicate with $\varphi_{t, E}: \Data^t \rightarrow \{0,1\}$, $1\leq t \leq T$, the efficacy decisions and with $\varphi_{t, F}: \Data^t \rightarrow \{0,1\}$, $1\leq t \leq T-1$, the futility decisions. Let $T^{\star} \in \{1,\ldots,T\}$  be the time when the trial stops. We use the utility function
\begin{eqnarray}
u(\varphi(\Data), \btheta) &=&  \lambda^{'}_{T^{\star}} 
 1\Big\{\gamma(\btheta)>0, \varphi_{T^\star,E}(\Data^{T^\star})=1\Big \}
 -\lambda m_{T^{\star}},
 \label{eq:utility_seq}
\end{eqnarray}
with $\varphi(\Data) =\{ (\varphi_{t,E} (\Data^t), \varphi_{t,F} (\Data^{t}) ), t=1,\ldots,T-1, \varphi_{T,E} (\Data^T) \},$
  $\lambda >0,$ and $\lambda^{'}_{t} >0$, for $t=1,\ldots,T$.
 This utility function assigns a reward $\lambda^{'}_{T^{\star}}
 $ for correctly rejecting the null hypothesis $H_{0}$ at time $T^{\star}$, and a cost $\lambda$ for each  enrolled patient. 

\textit{Parametric efficacy decision function.} There are several popular sequential decision rules to stop a trial for efficacy, including the Pocock \citep{pocock:1977},  O’Brien–Fleming \citep{obrien:1979} stopping rules, and the alpha-spending functions~\citep{lan:1983}. We specify the efficacy stopping rule $\varphi_{t,E} (\Data^t)$, parameterized by $\beta_{E} \in \mathbb R,$ using an alpha-spending function.  For binary outcomes, the alpha-spending function $\alpha^\star (r)$ for $r\in [0,1]$ is a function of the information fraction $r_t=n_t/n_{T}.$  It determines the proportion of the overall type I error rate $\alpha$ that is spent by the $t$-th IA. The type-I error allocated to the  $t$-th IA is $\bar{\alpha}(r_t) = \alpha^\star(r_t)$ - $\alpha^\star(r_{t-1}), 1\leq t \leq T$, where $r_0=0;$ we refer to \citet{lan:1983} and \citet{demets:1994} for details. We use the family in~\citet{hwang:1990} with $\beta_{E} \in \mathbb{R}$:
\begin{equation}
\alpha^{\star}_{\beta_{E}} (r) = 
\begin{cases}
\alpha \frac{1-\exp\{-\beta_{E}\times r\}}{1-\exp\{-\beta_{E} \}} & \mbox{if } \beta_{E} \neq 0, \\
\alpha \times r & \mbox{if } \beta_{E} = 0, 
\end{cases}
\label{eq:alpha_spending}
\end{equation}
  
As in \secname~\ref{sec:weighted_bonferroni}, we use efficacy decisions based on Z-statistics. Let $Z_t$ be the $Z$-statistics based on the primary outcomes of the first $n_t$ patients. We indicate with $z_{\beta_{E}, t}$ the recursively defined $(1-\bar{\alpha}_{\beta_{E}} (r_t))$-quantiles of the conditional distributions $p ( Q_t \mid Q_{t'} \leq z_{\beta_{E}, t'}, 1 \leq t' < t)$, where $(Q_1, \ldots, Q_T)$ is a mean zero Gaussian vector with covariance $Cov(Q_t, Q_{t'} )=\sqrt{n_t/n_{t'}},$ for $t\le t'$. The resulting $z_{\beta_{E}, t}$ values are the Z-statistics thresholds for the efficacy decisions (i.e., $\varphi_{t,E} (\Data^t)$). 
See \citet{jennison:1999} for a detailed explanation of how to compute the vector $(z_{\beta_{E}, 1}, \ldots, z_{\beta_{E},T}).$ 

The proposed alpha-spending function is based solely on the primary outcomes. This avoids a scenario in which an early auxiliary outcome suggests a treatment effect and most of the available type I error ($\alpha$) is consumed at the first interim analysis. If the interim efficacy analysis is not significant---potentially due to limited power---the small remaining $\alpha$ makes it unlikely that later analyses will be successful.

 \textit{Parametric futility decision function.} 
 Several approaches for early futility stopping use rules to stop the trial at IA $t$ if, given the available interim data, the likelihood of obtaining a statistically significant study result at any stage $t< t' \leq T$ falls below a pre-defined threshold~\citep{lan:1982, betensky:1997b, lachin:2005, berry:2010}. The likelihood of a positive result $\pr( \cup_{t< t' \leq T} \{ Z_{t'} > z_{\beta_{E}, t'}\} \mid \Data^t)$ can be computed using frequentist~\citep{lan:1982, betensky:1997b, lachin:2005} or Bayesian methodologies~\citep{spiegelhalter:1986,berry:2010}. Most of these methods approximate the probability $\pr( \cup_{t< t' \leq T} \{ Z_{t'} > z_{\beta_{E}, t'}\} \mid \Data^t)$ through computations that ignore the possibility of stopping for futility at time $t'>t$, that is all future futility IAs are ignored. We specify the futility stopping  criteria (i.e., $\varphi_{t,F} (\Data^{t})$) using Bayesian modeling. Our approach leverages auxiliary and primary outcomes into predictions and futility decisions. We stop for futility $\varphi_{t,F} (\Data^{t}) = 1$, when $\pr( \cup_{t< t' \le T} \{ Z_{t'} > z_{\beta_{E}, t'}\} \mid \Data^t) \leq \beta_{F}$. Note that $\varphi$ is parametrized by $\beta = (\beta_{E},\beta_{F}).$ The conditional probability $\pr( \cup_{t< t' \leq T} \{ Z_{t'} > z_{\beta_{E}, t'}\} \mid \Data^t),$ that we use for futility decisions, is based on a joint model of primary and auxiliary outcomes. We compute these probabilities using the Hamiltonian Monte Carlo implemented in  \texttt{rstan}~\citep{rstan}.

\textit{Prior model.} 
We use the same prior model as in \secname~\ref{sec:weighted_bonferroni} ($K=1$). 
Optimal values for $\beta=(\beta_F,\beta_E)$ are obtained  by maximizing the utility~\eqref{eq:utility_seq}, { with  $\lambda^{'}_{1}  =1,$  $\lambda^{'}_{2} = 0.5$, and a cost of $\lambda = 0.00005,$ for each patient.}  See also Supplementary \figurename~\ref{fig:utlity_sec6} for  details. 
 
\textit{Simulation Scenarios.}
We consider a two-stage RCT ($T=2$) with a maximum sample size of $N=200$ and $\alpha = 0.05.$ 
We generated the data using the same model of \secname~\ref{sec:multipleTestRes} with $K=1$, as well as the same simulation scenarios (population $k=1$). 

\textit{Alternative trial designs.}
We compare our approach (\textit{Auxiliary-Augmented} in \tablename~\ref{tab:sec6_tab}) with two alternative trial designs. The first is the  {\it Primary-Only} design, which uses only the primary outcome data for efficacy and futility analyses. This design uses the same group-sequential efficacy rule as the \textit{Auxiliary-Augmented} design based on the $\alpha$-spending \eqref{eq:alpha_spending} with $\beta_{E} = 2$. The futility decisions are based on a Bayesian model that ignores auxiliary information, a variation of model~\eqref{eq:multi_logit}. Specifically, we use the model  $\pr(Y_{k,i} =1\mid C_{k,i}, \zeta^Y_{0,k} , \zeta^Y_{1,k}) = F(\zeta^Y_{0,k} + \zeta^Y_{1,k} C_{k,i}),$ with a Gaussian prior on $\zeta^Y_{0,k}$, and  the mixture $\xi 1\{ \zeta^Y_{1,k} =0\} + (1-\xi) N(m_{Y,k}, \sigma^2_{Y,k})$ for $\zeta^Y_{1,k},$ and the same futility threshold $\beta_{F}$ of the \textit{Auxiliary-Augmented} design. The other design in our comparisons is the \textit{Auxiliary-Only} design. It is nearly identical to the \textit{Primary-Only} design, but it replaces the primary outcomes with the auxiliary outcomes for efficacy and futility decisions.

\textit{Results.}
For each simulation scenario, \tablename~\ref{tab:sec6_tab} reports the proportion of simulations in which the null hypothesis $H_0$ of no treatment effects is rejected. It also reports the frequency of simulations in which $H_0$ is rejected at IA and FA, and the average sample size. In these simulations, we did not model the patient enrollment times.

In scenario 1, without positive treatment effects, the three designs control the type-I error at $5\%.$ The \textit{Auxiliary-Augmented} design has the lowest expected sample size with approximatively $108$ patients compared to $118$ and $127$ of the \textit{Primary-Only} and \textit{Auxiliary-Only} designs.
In other words, \textit{Auxiliary-Augmented} provides a $9\%$ reduction in sample size compared to \textit{Primary-Only} and a $16\%$ compared to \textit{Auxiliary-Only}.

In scenario 2, the auxiliary outcomes suggest a positive effect,  but the experimental therapy does not improve survival. Not surprisingly, the \textit{Auxiliary-Only} design has a highly inflated type-I error rate. The \textit{Auxiliary-Augmented} and \textit{Primary-Only} designs control the type I error at the nominal level. In these simulations, the \textit{Auxiliary-Augmented}  design, as expected, stops the trial for futility less frequently than the  \textit{Primary-Only} design; the resulting average sample sizes are $138$  and  $118$ patients. 
 
In scenario 3, there is a positive effect on survival, but the effect on the auxiliary outcomes is null. The \textit{Auxilairy-Only} design does not detect the treatment effect; the power is  $5\%$. The \textit{Primary-Only} design has a power of $87\%$ and an expected sample size of approximately $127$ patients, compared to the \textit{Auxiliary-Augmented} design with a power of approximately $79\%$ and an average sample size of $116$ patients. 

In scenario 4, there is a positive treatment effect on both primary and auxiliary outcomes. The \textit{Auxiliary-Only} design has a power of approximately $96\%$ and an average sample size of $120$ patients. In comparison, the \textit{Primary-Only} design has a power of $87\%$ and an average sample size of $127$ patients, and the \textit{Auxiliary-Augmented} design has a power of  $89\%$ with an average sample size of $131$ patients. 

In scenario 5, the treatment has a positive effect on the primary outcome and a negative effect on the auxiliary. The \textit{Auxiliary-Only} design does not detect the treatment effect. The \textit{Primary-Only} design has a power of approximately $87\%$ and an average sample size of $127$ patients, while the \textit{Auxiliary-Augmented}  design, as expected, presents a reduced power of approximately $52\%$,  due to early futility decisions,   and an average sample size of  $101$ patients. 
\begin{table}[h!]
\center
\resizebox{\textwidth}{!}{ 
\begin{tabular}{r|cc|cc|cc} 
\toprule 
\multicolumn{7}{c}{Scenario 1}\\ 
\midrule 
& \multicolumn{2}{c}{$R_{1,c}= 1$} & \multicolumn{2}{c}{$R_{1,c} = 2$} & \multicolumn{2}{c}{$R_{1,c} = 10$} \\ 
Design &  False positive results &   $\mathbb E[N]$ &
               False positive results &   $\mathbb E[N]$ &
               False positive results &   $\mathbb E[N]$  \\ 
Auxiliary-Augmented &0.048  (0.042; 0.006) & 107.6 &0.043  (0.036; 0.008) & 108.3 &0.051  (0.042; 0.009) & 109.2 \\ 
Primary-Only &0.054  (0.043; 0.011) & 118.2 &0.049  (0.036; 0.013) & 117.8 &0.055  (0.043; 0.012) & 118.4 \\ 
Auxiliary-Only &0.060  (0.046; 0.014) & 126.7 &0.047  (0.038; 0.010) & 125.9 &0.057  (0.041; 0.015) & 128.3 \\ 
\bottomrule 
\multicolumn{7}{c}{Scenario 2}\\ 
\midrule 
& \multicolumn{2}{c}{$R_{1,c}= 1$} & \multicolumn{2}{c}{$R_{1,c} = 2$} & \multicolumn{2}{c}{$R_{1,c} = 10$} \\ 
Design &  False positive results &   $\mathbb E[N]$ &
               False positive results &   $\mathbb E[N]$ &
               False positive results &   $\mathbb E[N]$  \\ 
Auxiliary-Augmented &0.055  (0.039; 0.016) & 137.5 &0.055  (0.039; 0.015) & 136.9 &0.053  (0.039; 0.014) & 138.1 \\ 
Primary-Only &0.053  (0.039; 0.014) & 118.2 &0.052  (0.039; 0.012) & 117.6 &0.051  (0.039; 0.013) & 118.8 \\ 
Auxiliary-Only &0.958  (0.791; 0.167) & 119.6 &0.956  (0.788; 0.168) & 119.6 &0.961  (0.792; 0.169) & 119.5 \\ 
\bottomrule 
\multicolumn{7}{c}{Scenario 3}\\ 
\midrule 
& \multicolumn{2}{c}{$R_{1,c}= 1$} & \multicolumn{2}{c}{$R_{1,c} = 2$} & \multicolumn{2}{c}{$R_{1,c} = 10$} \\ 
Design &  Power &   $\mathbb E[N]$ &
               Power &   $\mathbb E[N]$ &
               Power &   $\mathbb E[N]$  \\ 
Auxiliary-Augmented &0.788  (0.661; 0.126) & 116.5 &0.797  (0.661; 0.137) & 117.0 &0.791  (0.677; 0.114) & 113.6 \\ 
Primary-Only &0.866  (0.664; 0.202) & 127.6 &0.873  (0.664; 0.209) & 127.9 &0.873  (0.680; 0.194) & 125.9 \\ 
Auxiliary-Only &0.052  (0.040; 0.013) & 127.5 &0.051  (0.040; 0.012) & 178.1 &0.053  (0.039; 0.014) & 129.1 \\ 
\bottomrule 
\multicolumn{7}{c}{Scenario 4}\\ 
\midrule 
& \multicolumn{2}{c}{$R_{1,c}= 1$} & \multicolumn{2}{c}{$R_{1,c} = 2$} & \multicolumn{2}{c}{$R_{1,c} = 10$} \\ 
Design &  Power &   $\mathbb E[N]$ &
               Power &   $\mathbb E[N]$ &
               Power &   $\mathbb E[N]$  \\ 
Auxiliary-Augmented &0.892  (0.673; 0.219) & 130.7 &0.887  (0.666; 0.221) & 130.9 &0.892  (0.666; 0.226) & 130.5 \\ 
Primary-Only &0.876  (0.673; 0.204) & 126.9 &0.870  (0.666; 0.204) & 127.4 &0.876  (0.666; 0.210) & 127.4 \\ 
Auxiliary-Only &0.959  (0.805; 0.154) & 118.3 &0.961  (0.790; 0.170) & 119.6 &0.962  (0.789; 0.173) & 120.2 \\ 
\bottomrule 
\multicolumn{7}{c}{Scenario 5}\\ 
\midrule 
& \multicolumn{2}{c}{$R_{1,c}= 1$} & \multicolumn{2}{c}{$R_{1,c} = 2$} & \multicolumn{2}{c}{$R_{1,c} = 10$} \\ 
Design &  Power &   $\mathbb E[N]$ &
               Power &   $\mathbb E[N]$ &
               Power &   $\mathbb E[N]$  \\ 
Auxiliary-Augmented &0.519  (0.511; 0.008) & 101.0 &0.540  (0.533; 0.008) & 101.0 &0.519  (0.514; 0.006) & 100.6 \\ 
Primary-Only &0.877  (0.668; 0.209) & 127.1 &0.876  (0.678; 0.198) & 126.5 &0.868  (0.663; 0.205) & 127.4 \\ 
Auxiliary-Only &0.000  (0.000; 0.000) & 100.1 &0.000  (0.000; 0.000) & 100.1 &0.000  (0.000; 0.000) & 100.1 \\ 
\bottomrule 
\end{tabular} 

}
\caption{Proportion of times the null hypothesis $H_0$ of no treatment effect is rejected (Power) and the expected sample size ($\mathbb E [N]$) for 5000 simulated trials. In parentheses,  we report the frequency of simulations in which $H_0$ is rejected at the interim and final analyses, respectively.
}
\label{tab:sec6_tab}
\end{table}
\subsection{Retrospective analysis with GBM trial data} \label{sec:data_app}
We use patient-level data from the CENTRIC trial~\citep[][]{centric} to evaluate the {\it Auxiliary-Augmented} trial design. CENTRIC is a Phase III clinical trial that enrolled patients with nGBM and tumors with methylated  O6-methylguanine (MGMT). Data on  $273$ patients treated with the SOC (TMZ+RT) are available through \textit{Project Data Sphere}~(\url {https://data.projectdatasphere.org/}). Our analysis uses OS at 24 months (OS-24) as the binary primary outcome and PFS at 12 months (PFS-12) as the auxiliary outcome. We excluded  $34$ patients with unknown 24-month survival status from the analysis.

We consider an RCT with a control arm (TMZ+RT) and an experimental arm,  a total sample size of 
$200$ patients and one interim analysis. The target type-I error is $5\%.$ To determine the optimal values $(\beta_F,\beta_E)$, we used the same prior and utility functions of the previous simulation.  {\it In silico} trial replicates tailored to nGBM, have been generated following these two steps:

\begin{enumerate}
\item We sample with replacement a patient from the TMZ+RT  group of the  CENTRIC dataset. The patient is randomly assigned to either the TMZ+RT or the experimental arm of our {\it in silico} trial.
\item If the patient is assigned to our {\it in silico} TMZ+RT, we include the actual OS-24 and PFS-12 as primary and auxiliary outcomes. If the patient is assigned to the {\it in silico} experimental arm, we apply a simple, interpretable perturbation to the actual OS-24 and PFS-12 to produce scenarios with positive or negative treatment effects. If the actual outcome was negative, we relabel the individual primary outcome as positive with probability $p_y$. Similarly, we relabel a negative auxiliary outcome into a positive with probability $p_s$. The parameters $p_y$ and $p_s$ create {\it in silico} trials with treatment effects.  
\end{enumerate}

By adjusting the values of $p_{y}$ and $p_{s}$, we create scenarios to examine the trial design, with or without concordance of the treatment effects on primary and auxiliary outcomes. We considered three scenarios: 1. (null scenario) $p_{y} = p_{s} = 0$ there are no treatment effects on both primary and auxiliary outcome; 2. (concordant effects) $p_{y} = 0.4$ and $p_{s} = 0.5$; 3. (discordant effects) $p_{y} = 0$ and $p_{s} = 0.5,$ mimicking a trial where an effect is present for PFS but not on OS. We mention, as an example, the study of lomustine and bevacizumab~\citep{wick:2017},  in which the effects on primary and auxiliary outcomes were discordant. In our analysis, we compared the \textit{Auxiliary-Augmented}, \textit{Primary-Only}, and  \textit{Auxiliary-Only} designs. We did not model the individual enrollment times.

\begin{table}[b]
\center
\resizebox{\textwidth}{!}{ 
\begin{tabular}{r|cc|cc|cc} 
\toprule 
& \multicolumn{2}{c}{Scenario 1} & \multicolumn{2}{c}{Scenario 2} & \multicolumn{2}{c}{Scenario 3} \\ 
& \multicolumn{2}{c}{null scenario} & \multicolumn{2}{c}{concordant effects} & \multicolumn{2}{c}{discordant effects} \\ 
\midrule 
Method &  Power &   $\mathbb E[N]$ &
               Power &   $\mathbb E[N]$ &
               Power &   $\mathbb E[N]$  \\ 
Auxiliary-Augmented &0.050  (0.041; 0.009) & 113.8 &0.809  (0.556; 0.253) & 142.0 &0.052  (0.041; 0.011) & 151.6 \\ 
Primary-Only &0.051  (0.041; 0.010) & 127.9 &0.798  (0.556; 0.241) & 138.2 &0.051  (0.041; 0.010) & 128.2 \\ 
Auxiliary-Only &0.047  (0.037; 0.010) & 125.8 &0.988  (0.881; 0.107) & 111.4 &0.987  (0.877; 0.110) & 111.7 \\ 
\bottomrule 
\end{tabular} 

}
\caption{Proportion of times the null hypothesis $H_0$ of no treatment effect on the primary outcome (auxiliary outcome for the \textit{Auxiliary-Only} design)
is rejected (Power) and the expected sample size ($\mathbb E [N]$) for 5000 simulated trials using the CENTRIC data. In parentheses,  we report the frequency of simulations in which $H_0$ is rejected at the interim and final analysis, respectively.
}
\label{tab:sec_data}
\end{table}

 \tablename~\ref{tab:sec_data} summarizes the results.  In scenario 1, without treatment effects, all three designs control the type-I error at the $5\%.$ The \textit{Auxiliary-Augmented} design has the lowest average sample size with $114$ patients, compared to 128 for the \textit{Primary-Only} design, and $126$ for the  \textit{Auxiliary-Only} design. In scenario 2,  with treatment effects both on primary and auxiliary outcomes,  the \textit{Auxiliary-Only} design has an average sample size of 142 patients (81\% power)  compared to $138$ patients (80\% power) for the \textit{Primary-Only} design,  and $111$ (99\% power)  for the \textit{Auxiliary-Only} design. In scenario 3, with a treatment effect only on the auxiliary outcomes, as expected, the \textit{Auxiliary-Only} design has an inflated type-I error, while the  \textit{Auxiliary-Augmented} and \textit{Auxiliary-Only} designs control the type-I error at $5\%.$ Due to promising auxiliary data, the   \textit{Auxiliary-Augmented} design has an average sample size of  $152$ patients compared to  $128$ of the \textit{Auxiliary-Only} design.
 In \secname~\ref{sec:s3} of the Supplementary Material, we re-implement this retrospective analysis using time-to-event outcomes.

\section{Discussion}\label{sec:discussion}
Using early outcomes as OS surrogates across drug development can improve efficiency but also introduces risk, as discordant effects on early outcomes and OS are well documented \citep{merino:2023}. We propose a decision-making approach based on joint analyses of primary and auxiliary outcomes, enabling control of frequentist operating characteristics to meet regulatory or other stakeholder requirements. For example, in RCTs with multiple subgroups under FWER control, our approach can  increase efficiency compared with standard methods that ignore auxiliary data.

When auxiliary variables are used in interim and final analyses, it is important to balance the benefit of early efficacy signals against risks, such as planning a large phase III trial on promising auxiliary data without evidence of benefit on primary outcomes. We formalize this trade-off by maximizing explicit utility criteria that reflect the trial objectives, subject to frequentist constraints that control trial risks. 

Prior models are particularly attractive when real-world data and completed trials are available. If conflicting information exists about certain aspects of the prior, more flexible models can better represent this uncertainty. For example, our findings indicate that relaxing the assumption of concordant effects between primary and auxiliary outcomes causes only modest efficiency losses when effects are truly concordant, but provides greater robustness when effects diverge (\tablename~\ref{tab:s5}). In the presence of conflicting or weak prior information,  careful simulation-based evaluation of alternative prior specifications and designs is therefore essential. These results can also be used to revise the design and update the initial probability model.  However, when no information is available, using a vague or non-informative prior can aid analysis but may pose challenges for design, especially in determining sample size and other key design quantities. If reliable data on the relationship between the primary and auxiliary outcomes are unavailable, using auxiliary data in the trial's design and analysis may not be appropriate.

Our approach has several limitations. First, we use an approximation to the decision problem in~\eqref{eq:maximization}. However,  as discussed in \secname~\ref{sec:approximation}, the resulting parametric decision functions are simple, interpretable, and computationally attractive, as they restrict the design space to reasonable, easy-to-describe subsets. More complex parametric decision functions could also be explored,  for example,  the parametric decision function~\eqref{eq:procedure1w} could partition the $\alpha$ value based on a $\beta$ coefficient that in turn depends on the degree of correlation between effects on primary and auxiliary outcomes across subgroups, as indicated by the data. Second, frequentist control of false positives is guaranteed only asymptotically, so trial-specific simulations, such as those in \secname~\ref{sec:weighted_bonferroni}, remain important when designing a study. Finally, we have not systematically characterized the combinations of utility criteria and regulatory constraints under which auxiliary outcomes increase expected utility; instead, we provide illustrative examples where their use leads to higher expected utility.

The proposed framework can be applied to a range of decision problems, including complex designs with interim decisions, multiple groups, and multiple treatment arms. For example, in clinical studies that seek to identify patient subgroups with positive treatment effects, the large number of potential partitions raises concerns about cherry-picking~\citep{guo:2021}. In such settings, auxiliary and primary outcomes can be jointly used to select and recommend subgroups for subsequent confirmatory trials. An interpretable utility function can reflect the number of patients expected to benefit from treatment after regulatory approval, while frequentist constraints can bound the probability that a phase III confirmatory study fails to demonstrate efficacy. Thus, the framework leverages both auxiliary and primary outcomes to identify subgroups while controlling the risk of failure in a subsequent phase III trial.

\section*{Acknowledgments}
The examples in Section~\ref{sec:data_app}  and Supplementary S3 use information obtained from \url{www.projectdatasphere.org}, which is maintained by Project Data Sphere. Neither Project Data Sphere nor the owner(s) of any information from the website have contributed to, approved, or are in any way responsible for the contents of this publication. Lorenzo Trippa and Steffen Ventz have been supported by the NIH grant 	R01LM013352.

\newcommand{\beginsupplement}{%
        \setcounter{table}{0}
        \renewcommand{\thetable}{S\arabic{table}}%
        \setcounter{figure}{0}
        \renewcommand{\thefigure}{S\arabic{figure}}%
	\setcounter{section}{0}
        \renewcommand{\thesection}{S\arabic{section}}
     }

\section*{Supplementary Material}
\beginsupplement
\section{Proof of the propositions}
\textbf{Proposition 3.1} The set $\argmax _{\varphi \in \varPhi_{\alpha,h} }
 \mathbb E^\pi[ \varphi(\Data) ]$
 contains at least one decision function in $\varPhi'_{\alpha,h}$ 

\begin{proof} 
 Let $\varphi_{Y,S}(y,s)$ be a solution of (3). This function maps the data $(Y,S)$ into the interval $[0,1]$. We define $\varphi_{Y}(y) = \int_{\mathcal S} \varphi_{Y,S}(y,s) \pi(s \mid y) \d s$, where $\pi (s \mid y )$ is the conditional distribution of 
 $S$ given $Y$. The function $\varphi_{Y}$ maps the primary information $Y$ into the interval $[0,1]$ and does not depend on $S$. Moreover, 
\begin{eqnarray*}
\mathbb E^\pi \left[ \varphi_{Y}(Y) \right] &=&
\int_{\mathcal Y }
\int_\Theta
\varphi_{Y}(y) 
p_\theta(y) \pi(\theta)\d\theta \d y= \mathbb E^\pi \left[ \varphi_{Y,S}(Y,S) \right].
\end{eqnarray*}
It remains to show that $\varphi(Y)$ controls the type I error rate at level $\alpha$. 
For a general $\tilde \theta \in H_0$ 
\begin{eqnarray*}
 \mathbb E_{\tilde \theta} \left[ \varphi_{Y}(Y) \right] =
 \mathbb E_{\tilde \theta} \left[\mathbb E^\pi [\varphi_{Y,S}(Y,S) \mid Y] \right] &=& \int_{\mathcal Y} \left [
\int_{\mathcal S}
\varphi_{Y,S}(y,s) \pi(s\mid y) \d s \right] p_{\tilde \theta}(y) \d y \\
&=&
\int_{\mathcal S}\int_{\mathcal Y} \varphi_{Y,S}(y,s) \pi(s\mid y) p_{\tilde \theta}(y) \d y \d s \leq \alpha,
\end{eqnarray*}
where $p_{\tilde \theta}(y)$ is the distribution of $Y$ when $\theta=\tilde\theta$. The last inequality follows from the fact that the product $\pi(s\mid y) p_{\tilde \theta}(y)$ identifies a joint distribution for $(S,Y)$ in $H_0$ and $ \varphi_{Y,S}(y,s) \in \varPhi'_{\alpha,h}$. Note that $\Theta$ and $H_0$ were defined without assuming the independence of primary and auxiliary outcomes across patients.
\end{proof}

\textbf{Proposition 5.1}
Assume that for each $\btheta \in \Theta$ the means $\mathbb E_\theta( \vert S_{k,i} \vert \mid C_{k,i}=c) $ and variances $\mbox{Var}_\theta(Y_{k,i} \mid C_{k,i}=c) $ are finite for all $k=1, \ldots, K$ and $c=0,1$. 
Consider a sequence of clinical trials with increasing sample sizes such that $N_k \rightarrow \infty$ for $k=1, \ldots, K$. Then, for any $\bbeta \in \mathbb R^K$, the decision function $\varphi_{\bbeta} : \mathcal{ D} \rightarrow \{0,1\}^K$ 
 in (6)  with weights defined in (7)  controls asymptotically the FWER at level $\alpha$,
$$\lim_{N\rightarrow \infty} \mathbb E_{\btheta}[h(\varphi_{\beta}(\Data), \btheta)] \leq \alpha,\;\;\;\;\;\;\;\;\;\;\;\;\; \forall \theta\in\Theta,$$
 where $h(\varphi_\beta(\Data), \btheta ) = 1\left \{ \sum_{k=1}^K \varphi_{\beta,k}(\Data) 1 \{ \gamma_k \leq 0\} \geq 1 \right \}$.
\begin{proof}
	 By law of large numbers, the weights in~(7) converge to $\boldsymbol \omega^{\infty}(\bbeta) = (\omega^\infty_1(\bbeta) ,\ldots, \omega^\infty_K(\bbeta))$, 
	 where $\omega^\infty_k(\bbeta) \propto \exp\{ \beta_k ( \mathbb E_{\theta}[ S_{k,1} \mid C_{k,1}=1] - \mathbb E_{\theta}[ S_{k,1} \mid C_{k,1}=0] )\}.$ Using Markov's inequality, we have that
\begin{align*}
 p_\theta\left(\sum_{k : \gamma_k\le 0}
1\{\text{\tt pv}_{k} \leq \alpha w^\infty_k(\bbeta)\} \geq1 \right) 
 &\leq \mathbb{E}_\theta\left(\sum_{k : \gamma_k\le 0 } 
 1\{\text{\tt pv}_{k} \leq \alpha w^\infty_k(\bbeta)\}
 \right)
 \\
 & \leq \sum_{k : \gamma_k\le 0 } \mathbb{E}_\theta\left(1\{ \text{\tt pv}_{k} \leq \alpha w^\infty_k(\bbeta)\} \right) \le \alpha. 
\end{align*} 
The asymptotic control of the FWER at level $\alpha$ now follows from 
the facts that (i) as $N$ increases $\varphi_{\bbeta}(\Data)$ converges almost surely to
 $\Big( 1\{\text{\tt pv}_k \leq \alpha w^\infty_k(\bbeta) \}, k=1, \ldots K \Big) $, and 
 (ii) $\varphi_{\bbeta}(\Data)$ includes $K$ bounded and hence uniformly integrable functions.
\end{proof}

\section{Supplementary figures and tables}
Code to reproduce figures and tables is available at \url{https://github.com/rMassimiliano/primary\_and\_auxiliary}.

\begin{table}[h!]
\center
\begin{tabular}{lrrrrrr}
\toprule
  & Mean & Min. & 1st Qu. & Median &  3rd Qu. & Max.\\
\midrule
Proportion Y=1 (SOC) & 0.238 & 0.029 & 0.175 & 0.230 &  0.294 & 0.657\\
Proportions Y=1 (Treated) & 0.242 & 0.017 & 0.176 & 0.231 &  0.299 & 0.76
2\\
Difference in proportions (TE) for Y & 0.003 & -0.236 & -0.038 & 0.002 & 
 0.044 & 0.289\\
Proportion S=1 (SOC) & 0.349 & 0.069 & 0.276 & 0.344 &  0.416 & 0.764\\
Proportions S=1 (Treated) & 0.352 & 0.039 & 0.274 & 0.347 &  0.424 & 0.81
4\\
Difference in proportions (TE) for S & 0.003 & -0.380 & -0.046 & 0.002 & 
 0.050 & 0.342\\
Correlation between Y and S & 0.138 & -0.197 & 0.085 & 0.138  &0.193 & 0
.414\\
\bottomrule
\end{tabular}
\caption{Summaries of some characteristics of the prior model used in Section 5  (simulation experiments) of the main manuscript.  Summaries have been calculated simulating $5000$ trials with a sample size of $200$ and prevalence of $0.6$ for subgroup one and $0.4$ for subgroup two. We assumed the same prior for the subgroups $k=1,2$ with
$\zeta^S_{0,k} \sim \mathcal N(-0.8,0.5^2),$
$\zeta^Y_{0,k} \sim \mathcal N(-1.5,0.5^2),$
$\sigma^2_k = 1,$
$\xi =0.1,$
$m_{S,k} = 0,$
$\sigma^2_{S,k} = 0.8,$
$v_{k} = 6$ , 
$o_{k} =1$. The correlation between the treatment effect (difference in proportions) measured with primary and auxiliary outcomes is approximately $0.29.$
}
\label{tab:s1}
\end{table}

\begin{figure}[h!]
\begin{center}
 \includegraphics[width = 0.60\textwidth]{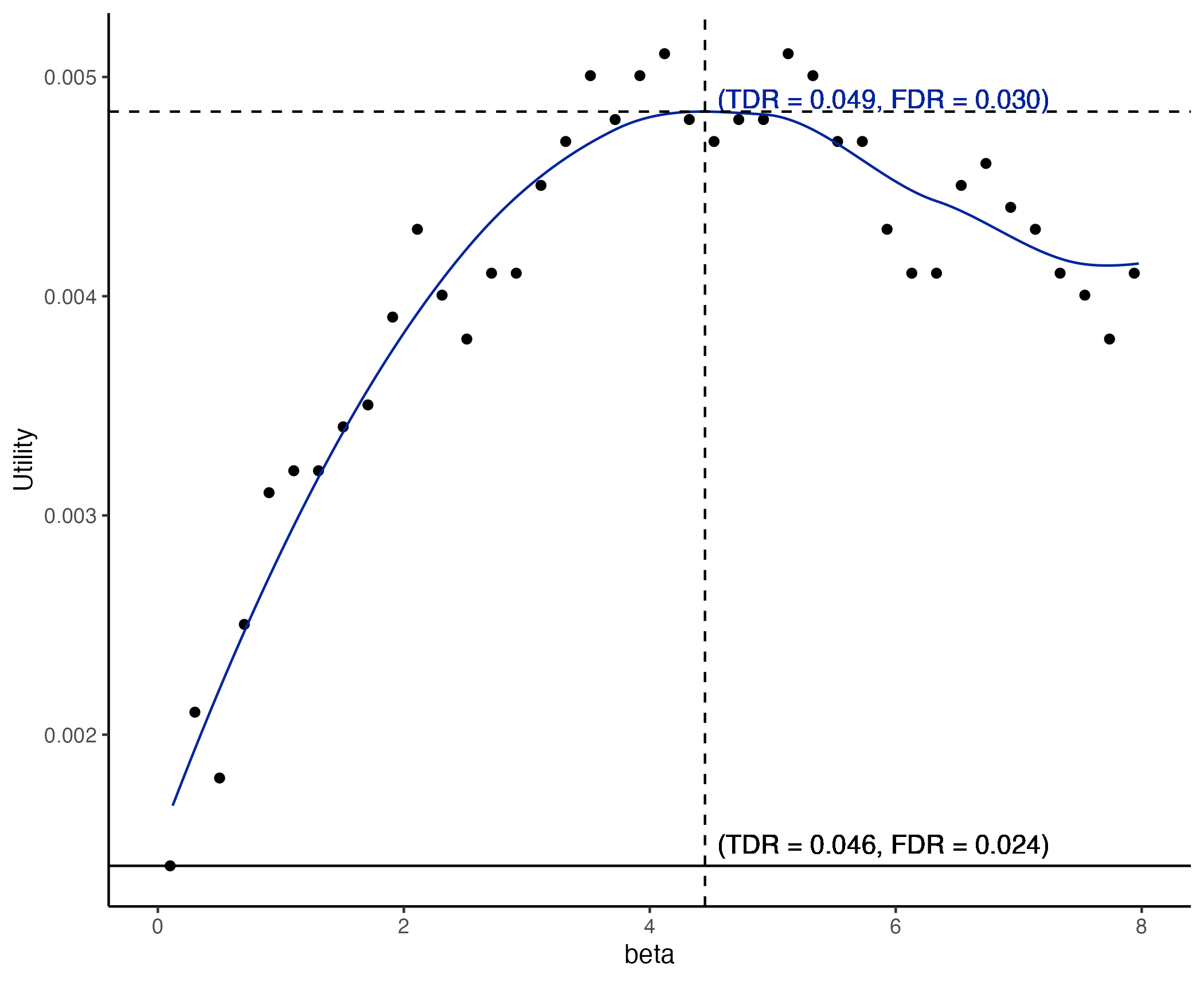}
\end{center}
\caption{Expected utility for the simulation study presented in \secname~{5.1}, for $K=2$ subgroups, and $\lambda = 0.5$. The expected utility has been computed on a grid of $\beta$ values with 5000 trial replicates from the prior model.  The solid blue line is a smooth estimate of the expected utility obtained using a Local Polynomial Regression Fitting (loess) (R function \texttt{loess} with \texttt{span} = 0.4). The dashed lines indicate the optimal solution $\beta = 4.45$ (vertical line)  and its expected utility (horizontal line), while the horizontal solid line indicates the expected utility associated with the Bonferroni method, corresponding to $\beta=0$. We also report the average False Discovery Rate (FDR), defined as the number of false rejections divided by the total number of rejections, and the True Discovery Rate (TDR), defined as the number of true rejections by the total number of rejections, for the optimal {\it Auxiliary-Augmented} method, and the Bonferroni method. The average FDR and TDR are also computed using 5000 replicates from the prior model. }
\label{fig:s1}
\end{figure}

\begin{table}[h!]
\centering
\begin{tabular}{rccc|ccc} 
\toprule 
& \multicolumn{3}{c}{Scenario 1} & \multicolumn{3}{c}{Scenario 2} \\ 
\midrule 
Method  & $ R_{k,c}= 1$ & $R_{k,c} = 2$ & $R_{k,c} = 10$  & $ R_{k,c}= 1$ & $R_{k,c} = 2$ & $R_{k,c} = 10$ \\ 
Auxiliary-Augmented & 0.052 & 0.059 &0.057 &0.052 &0.053 &0.059 \\ 
Auxiliary-Augmented-B & 0.050 & 0.053 &0.048 &0.053 &0.050 &0.051 \\ 
Bonferroni & 0.051 & 0.056 &0.049 &0.054 &0.050 &0.053 \\ 
Holm & 0.051 & 0.056 &0.049 &0.054 &0.050 &0.053 \\ 
FAB & 0.042 & 0.046 &0.042 &0.043 &0.044 &0.044 \\ 
Auxiliary-Only & 0.052 & 0.051 &0.054 &0.825 &0.825 &0.831 \\ 
\bottomrule 
\end{tabular}
\\
\begin{tabular}{rccc|ccc} 
 & \multicolumn{6}{c}{$K=6$}  \\ 
\toprule 
& \multicolumn{3}{c}{Scenario 1} & \multicolumn{3}{c}{Scenario 2} \\ 
\midrule 
Method  & $ R_{k,c}= 1$ & $R_{k,c} = 2$ & $R_{k,c} = 10$  & $ R_{k,c}= 1$ & $R_{k,c} = 2$ & $R_{k,c} = 10$ \\ 
Auxilary-Augmented & 0.056 & 0.076 &0.090 &0.057 &0.061 &0.082 \\ 
Auxilary-Augmented-B & 0.055 & 0.058 &0.047 &0.052 &0.047 &0.045 \\ 
Bonferroni & 0.053 & 0.060 &0.057 &0.055 &0.053 &0.060 \\ 
Holm & 0.053 & 0.060 &0.057 &0.055 &0.053 &0.060 \\ 
FAB & 0.030 & 0.036 &0.033 &0.033 &0.030 &0.036 \\ 
Auxiliary-Only & 0.061 & 0.058 &0.057 &0.809 &0.811 &0.801 \\ 
\bottomrule 
\end{tabular}
\caption{Family Wise Error Rate (FWER) for the simulations presented in Section 5 of the main manuscript.}
\label{tab:s2}
\end{table}

\begin{table}[h!]
{\centering
 \resizebox{0.9\textwidth}{!}{
\begin{tabular}{rcccc>{\columncolor[gray]{0.9}}cccccc}
\toprule 

\multicolumn{10}{c}{Scenario 1 $R_{k,c} = 1$}\\ 
\midrule 
($\lambda, \beta_{YS})$ & (0.1, 14.07)&(0.2, 13.91)&(0.3, 13.75)&(0.4, 13.43)&(0.5, 4.45)&(0.6, 4.21)&(0.7, 4.05)&(0.8, 3.93)&(0.9, 3.81)&(1.0, 3.69) \\ 
subgroup 1 &0.022 &0.022 &0.022 &0.022 &0.023 &0.023 &0.023 &0.023 &0.023 &0.023 \\ 
subgroup 2 &0.029 &0.029 &0.029 &0.029 &0.029 &0.029 &0.029 &0.029 &0.029 &0.029 \\ 
\bottomrule  
\multicolumn{10}{c}{Scenario 1 $R_{k,c} = 2$}\\ 
\midrule 
($\lambda, \beta_{YS})$ & (0.1, 14.07)&(0.2, 13.91)&(0.3, 13.75)&(0.4, 13.43)&(0.5, 4.45)&(0.6, 4.21)&(0.7, 4.05)&(0.8, 3.93)&(0.9, 3.81)&(1.0, 3.69) \\ 
subgroup 1 &0.030 &0.030 &0.030 &0.030 &0.030 &0.030 &0.030 &0.030 &0.030 &0.030 \\ 
subgroup 2 &0.031 &0.031 &0.031 &0.031 &0.031 &0.030 &0.030 &0.030 &0.030 &0.030 \\ 
\bottomrule  
\multicolumn{10}{c}{Scenario 1 $R_{k,c} = 10$}\\ 
\midrule 
($\lambda, \beta_{YS})$ & (0.1, 14.07)&(0.2, 13.91)&(0.3, 13.75)&(0.4, 13.43)&(0.5, 4.45)&(0.6, 4.21)&(0.7, 4.05)&(0.8, 3.93)&(0.9, 3.81)&(1.0, 3.69) \\ 
subgroup 1 &0.032 &0.032 &0.032 &0.032 &0.030 &0.030 &0.030 &0.030 &0.030 &0.030 \\ 
subgroup 2 &0.033 &0.033 &0.033 &0.033 &0.028 &0.028 &0.028 &0.028 &0.028 &0.028 \\ 
\bottomrule  
\multicolumn{10}{c}{Scenario 2 $R_{k,c} = 1$}\\ 
\midrule 
($\lambda, \beta_{YS})$ & (0.1, 14.07)&(0.2, 13.91)&(0.3, 13.75)&(0.4, 13.43)&(0.5, 4.45)&(0.6, 4.21)&(0.7, 4.05)&(0.8, 3.93)&(0.9, 3.81)&(1.0, 3.69) \\ 
subgroup 1 &0.047 &0.047 &0.047 &0.046 &0.039 &0.039 &0.038 &0.038 &0.038 &0.037 \\ 
subgroup 2 &0.004 &0.005 &0.005 &0.005 &0.014 &0.015 &0.015 &0.016 &0.016 &0.016 \\ 
\bottomrule  
\multicolumn{10}{c}{Scenario 2 $R_{k,c} = 2$}\\ 
\midrule 
($\lambda, \beta_{YS})$ & (0.1, 14.07)&(0.2, 13.91)&(0.3, 13.75)&(0.4, 13.43)&(0.5, 4.45)&(0.6, 4.21)&(0.7, 4.05)&(0.8, 3.93)&(0.9, 3.81)&(1.0, 3.69) \\ 
subgroup 1 &0.044 &0.043 &0.043 &0.043 &0.036 &0.036 &0.036 &0.035 &0.035 &0.035 \\ 
subgroup 2 &0.008 &0.008 &0.008 &0.009 &0.017 &0.017 &0.018 &0.018 &0.018 &0.018 \\ 
\bottomrule  
\multicolumn{10}{c}{Scenario 2 $R_{k,c} = 10$}\\ 
\midrule 
($\lambda, \beta_{YS})$ & (0.1, 14.07)&(0.2, 13.91)&(0.3, 13.75)&(0.4, 13.43)&(0.5, 4.45)&(0.6, 4.21)&(0.7, 4.05)&(0.8, 3.93)&(0.9, 3.81)&(1.0, 3.69) \\ 
subgroup 1 &0.050 &0.050 &0.050 &0.050 &0.041 &0.040 &0.039 &0.039 &0.039 &0.038 \\ 
subgroup 2 &0.010 &0.010 &0.010 &0.010 &0.019 &0.020 &0.020 &0.020 &0.020 &0.020 \\ 
\bottomrule  
\multicolumn{10}{c}{Scenario 3 $R_{k,c} = 1$}\\ 
\midrule 
($\lambda, \beta_{YS})$ & (0.1, 14.07)&(0.2, 13.91)&(0.3, 13.75)&(0.4, 13.43)&(0.5, 4.45)&(0.6, 4.21)&(0.7, 4.05)&(0.8, 3.93)&(0.9, 3.81)&(1.0, 3.69) \\ 
subgroup 1 &0.625 &0.627 &0.627 &0.628 &0.673 &0.673 &0.673 &0.673 &0.673 &0.674 \\ 
subgroup 2 &0.027 &0.027 &0.027 &0.027 &0.026 &0.027 &0.027 &0.027 &0.027 &0.027 \\ 
\bottomrule  
\multicolumn{10}{c}{Scenario 3 $R_{k,c} = 2$}\\ 
\midrule 
($\lambda, \beta_{YS})$ & (0.1, 14.07)&(0.2, 13.91)&(0.3, 13.75)&(0.4, 13.43)&(0.5, 4.45)&(0.6, 4.21)&(0.7, 4.05)&(0.8, 3.93)&(0.9, 3.81)&(1.0, 3.69) \\ 
subgroup 1 &0.631 &0.631 &0.631 &0.635 &0.673 &0.673 &0.673 &0.674 &0.674 &0.674 \\ 
subgroup 2 &0.030 &0.030 &0.030 &0.030 &0.031 &0.031 &0.031 &0.031 &0.030 &0.031 \\ 
\bottomrule  
\multicolumn{10}{c}{Scenario 3 $R_{k,c} = 10$}\\ 
\midrule 
($\lambda, \beta_{YS})$ & (0.1, 14.07)&(0.2, 13.91)&(0.3, 13.75)&(0.4, 13.43)&(0.5, 4.45)&(0.6, 4.21)&(0.7, 4.05)&(0.8, 3.93)&(0.9, 3.81)&(1.0, 3.69) \\ 
subgroup 1 &0.618 &0.619 &0.620 &0.621 &0.666 &0.666 &0.667 &0.667 &0.668 &0.669 \\ 
subgroup 2 &0.042 &0.042 &0.041 &0.042 &0.036 &0.036 &0.036 &0.035 &0.035 &0.035 \\ 
\bottomrule  
\multicolumn{10}{c}{Scenario 4 $R_{k,c} = 1$}\\ 
\midrule 
($\lambda, \beta_{YS})$ & (0.1, 14.07)&(0.2, 13.91)&(0.3, 13.75)&(0.4, 13.43)&(0.5, 4.45)&(0.6, 4.21)&(0.7, 4.05)&(0.8, 3.93)&(0.9, 3.81)&(1.0, 3.69) \\ 
subgroup 1 &0.757 &0.757 &0.757 &0.757 &0.731 &0.729 &0.727 &0.727 &0.726 &0.725 \\ 
subgroup 2 &0.006 &0.006 &0.006 &0.006 &0.014 &0.013 &0.013 &0.013 &0.014 &0.014 \\ 
\bottomrule  
\multicolumn{10}{c}{Scenario 4 $R_{k,c} = 2$}\\ 
\midrule 
($\lambda, \beta_{YS})$ & (0.1, 14.07)&(0.2, 13.91)&(0.3, 13.75)&(0.4, 13.43)&(0.5, 4.45)&(0.6, 4.21)&(0.7, 4.05)&(0.8, 3.93)&(0.9, 3.81)&(1.0, 3.69) \\ 
subgroup 1 &0.760 &0.759 &0.759 &0.759 &0.735 &0.733 &0.732 &0.731 &0.728 &0.728 \\ 
subgroup 2 &0.008 &0.008 &0.008 &0.009 &0.018 &0.019 &0.020 &0.021 &0.021 &0.021 \\ 
\bottomrule  
\multicolumn{10}{c}{Scenario 4 $R_{k,c} = 10$}\\ 
\midrule 
($\lambda, \beta_{YS})$ & (0.1, 14.07)&(0.2, 13.91)&(0.3, 13.75)&(0.4, 13.43)&(0.5, 4.45)&(0.6, 4.21)&(0.7, 4.05)&(0.8, 3.93)&(0.9, 3.81)&(1.0, 3.69) \\ 
subgroup 1 &0.763 &0.763 &0.763 &0.763 &0.731 &0.729 &0.728 &0.727 &0.726 &0.724 \\ 
subgroup 2 &0.011 &0.011 &0.011 &0.011 &0.016 &0.016 &0.016 &0.017 &0.017 &0.018 \\ 
\bottomrule  
\multicolumn{10}{c}{Scenario 5 $R_{k,c} = 1$}\\ 
\midrule 
($\lambda, \beta_{YS})$ & (0.1, 14.07)&(0.2, 13.91)&(0.3, 13.75)&(0.4, 13.43)&(0.5, 4.45)&(0.6, 4.21)&(0.7, 4.05)&(0.8, 3.93)&(0.9, 3.81)&(1.0, 3.69) \\ 
subgroup 1 &0.338 &0.341 &0.344 &0.352 &0.578 &0.586 &0.590 &0.592 &0.595 &0.598 \\ 
subgroup 2 &0.044 &0.044 &0.044 &0.044 &0.037 &0.036 &0.036 &0.036 &0.036 &0.036 \\ 
\bottomrule  
\multicolumn{10}{c}{Scenario 5 $R_{k,c} = 2$}\\ 
\midrule 
($\lambda, \beta_{YS})$ & (0.1, 14.07)&(0.2, 13.91)&(0.3, 13.75)&(0.4, 13.43)&(0.5, 4.45)&(0.6, 4.21)&(0.7, 4.05)&(0.8, 3.93)&(0.9, 3.81)&(1.0, 3.69) \\ 
subgroup 1 &0.336 &0.339 &0.343 &0.349 &0.564 &0.572 &0.577 &0.581 &0.584 &0.587 \\ 
subgroup 2 &0.050 &0.050 &0.050 &0.050 &0.041 &0.041 &0.040 &0.040 &0.040 &0.039 \\ 
\bottomrule  
\multicolumn{10}{c}{Scenario 5 $R_{k,c} = 10$}\\ 
\midrule 
($\lambda, \beta_{YS})$ & (0.1, 14.07)&(0.2, 13.91)&(0.3, 13.75)&(0.4, 13.43)&(0.5, 4.45)&(0.6, 4.21)&(0.7, 4.05)&(0.8, 3.93)&(0.9, 3.81)&(1.0, 3.69) \\ 
subgroup 1 &0.337 &0.339 &0.341 &0.349 &0.568 &0.574 &0.578 &0.582 &0.585 &0.588 \\ 
subgroup 2 &0.050 &0.050 &0.050 &0.050 &0.043 &0.042 &0.042 &0.042 &0.042 &0.042 \\ 
\bottomrule  
\end{tabular}

 }}
 \caption{Sensitivity analyses for the $\lambda$ parameter in the utility function (5). For the simulation study presented in Section 5.1, for K = 2 subgroups, we compute the proportion of times the null hypotheses $H_{0,k}$ have been rejected for subgroup $k=1,2$ for ten {\it Auxiliary-Augmented} procedures across 5000 trials. We consider $\lambda \in \{0.1,0.2,0.3,\ldots,0.9,1\}.$ The $\lambda$ used in the simulations study in the main manuscript, $\lambda = 0.5,$ is highlighted in grey.}
 \label{tab:s3}
\end{table}

\clearpage

\begin{figure}[h!]
\begin{center}
\includegraphics[width = 0.69\textwidth]{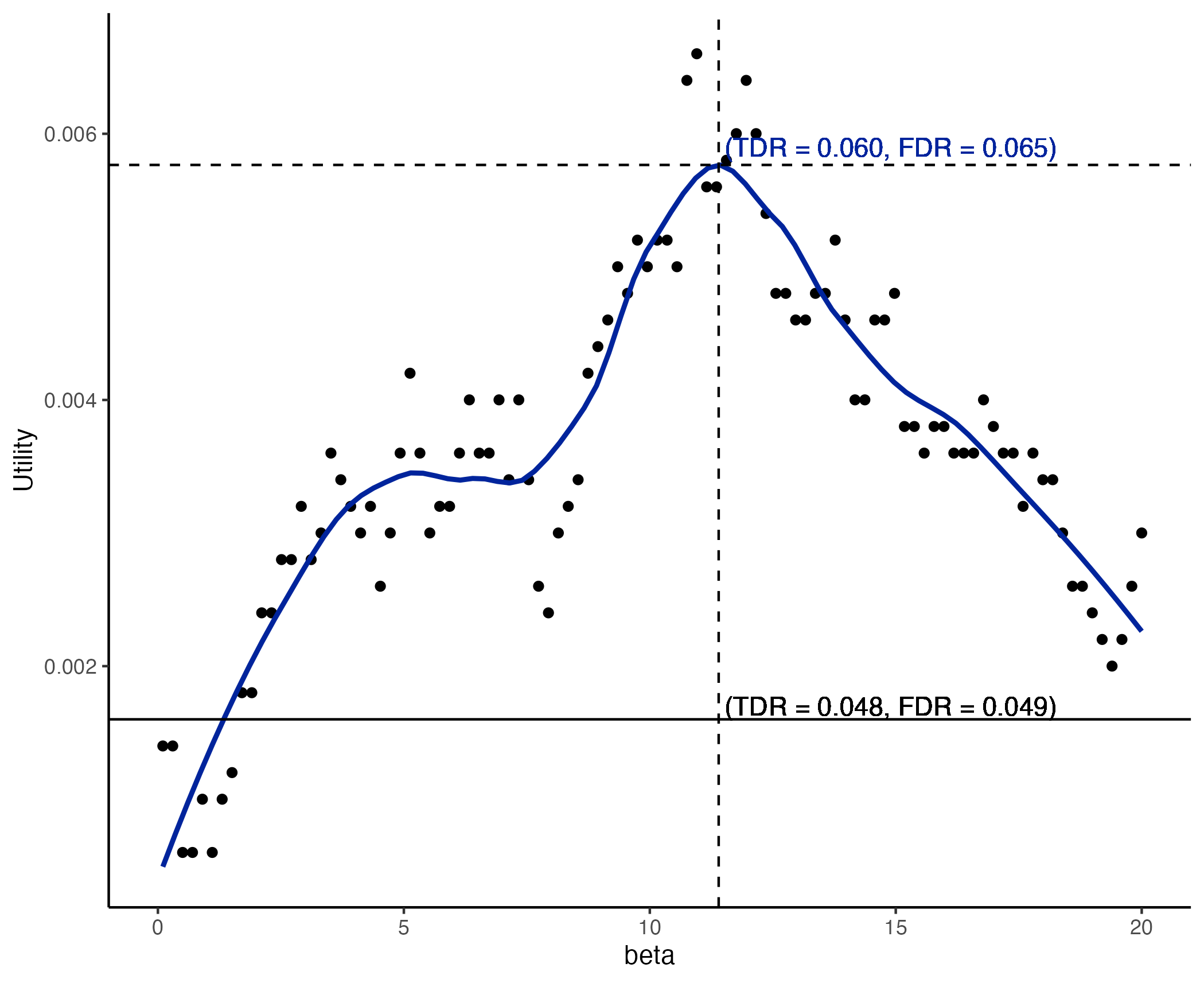}
\end{center}
\caption{Expected utility for the simulation study presented in \secname~{5.1}, for $K=6$ subgroups, and $\lambda = 1.0$. The expected utility has been computed on a grid of $\beta$ values with 5000 trial replicates.  The solid blue line is a smooth estimate of the expected utility obtained using a Local Polynomial Regression Fitting (loess) (R function \texttt{loess} with \texttt{span} = 0.4). The dashed lines indicate the optimal solution $\beta = 11.4$ (vertical line)  and its expected utility (horizontal line), while the horizontal solid line indicates the expected utility associated with the Bonferroni method. We also report the average False Discovery Rate (FDR), defined as the number of false rejections divided by the total number of rejections, and the True Discovery Rate (TDR), defined as the number of true rejections by the total number of rejections, for the optimal {\it Auxiliary-Augmented} method, and the Bonferroni method. The average FDR and TDR are also computed using 5000 replicates from the prior model.}
\label{fig:s2}
\end{figure}

\begin{table}[h!]
\centering
 \resizebox{0.9\textwidth}{!}{\begin{tabular}{rcccccc} 
\toprule 
\multicolumn{7}{c}{Scenario 1}\\ 
\midrule 
& \multicolumn{2}{c}{$ R_{k,c}= 1$} & \multicolumn{2}{c}{$R_{k,c} = 2$} & \multicolumn{2}{c}{$R_{k,c} = 10$} \\ 
Method                & subgroup 1 & subgroup 2--6 & subgroup 1 & subgroup 2--6 & subgroup 1 & subgroup 2--6 \\
Auxiliary-Augmented   & 0.009 & 0.048 &0.010 & 0.066 &0.012 & 0.079 \\
Auxiliary-Augmented-B & 0.009 & 0.047 &0.008 & 0.051 &0.006 & 0.041 \\
Bonferroni            & 0.010 & 0.044 &0.009 & 0.051 &0.009 & 0.048 \\
Holm                  & 0.010 & 0.044 &0.009 & 0.051 &0.009 & 0.048 \\
FAB                   & 0.007 & 0.023 &0.007 & 0.029 &0.007 & 0.026 \\
Auxiliary-Only        & 0.013 & 0.048 &0.009 & 0.050 &0.010 & 0.047 \\
\bottomrule 
\multicolumn{7}{c}{Scenario 2}\\ 
\midrule 
& \multicolumn{2}{c}{$ R_{k,c} = 1$} & \multicolumn{2}{c}{$ R_{k,c} = 2$} & \multicolumn{2}{c}{$ R_{k,c} = 10$} \\Method  & subgroup 1 & subgroup 2--6 & subgroup 1 & subgroup 2--6 & subgroup 1 & subgroup 2--6 \\ 
Auxiliary-Augmented   & 0.035 & 0.022 &0.033 & 0.028 &0.039 & 0.043 \\
Auxiliary-Augmented-B & 0.033 & 0.020 &0.026 & 0.022 &0.022 & 0.023 \\
Bonferroni            & 0.009 & 0.047 &0.007 & 0.046 &0.011 & 0.049 \\
Holm                  & 0.009 & 0.047 &0.007 & 0.046 &0.011 & 0.049 \\
FAB                   & 0.007 & 0.027 &0.005 & 0.025 &0.008 & 0.028 \\
Auxiliary-Only        & 0.799 & 0.046 &0.801 & 0.050 &0.791 & 0.047 \\
\bottomrule 
\multicolumn{7}{c}{Scenario 3}\\ 
\midrule 
& \multicolumn{2}{c}{$ R_{k,c} = 1$} & \multicolumn{2}{c}{$R_{k,c} = 2$} & \multicolumn{2}{c}{$R_{k,c} = 10$} \\ 
Method  & subgroup 1 & subgroup 2--6 & subgroup 1 & subgroup 2--6 & subgroup 1 & subgroup 2--6 \\ 
Auxiliary-Augmented    & 0.560 & 0.047 &0.565 & 0.061 &0.560 & 0.085 \\
 Auxiliary-Augmented-B & 0.555 & 0.046 &0.528 & 0.045 &0.456 & 0.037 \\
 Bonferroni            & 0.625 & 0.049 &0.637 & 0.047 &0.629 & 0.046 \\
Holm                   & 0.626 & 0.054 &0.638 & 0.054 &0.629 & 0.053 \\
FAB                    & 0.597 & 0.030 &0.613 & 0.027 &0.603 & 0.027 \\
Auxiliary-Only         & 0.008 & 0.048 &0.011 & 0.044 &0.009 & 0.051 \\
\bottomrule 
\multicolumn{7}{c}{Scenario 4}\\ 
\midrule 
& \multicolumn{2}{c}{$R_{k,c} = 1$} & \multicolumn{2}{c}{$R_{k,c} = 2$} & \multicolumn{2}{c}{$R_{k,c} = 10$} \\ 
Method  & subgroup 1 & subgroup 2--6 & subgroup 1 & subgroup 2--6 & subgroup 1 & subgroup 2--6 \\ 
Auxiliary-Augmented    & 0.799 & 0.019 &0.791 & 0.032 &0.780 & 0.046 \\
 Auxiliary-Augmented-B & 0.792 & 0.019 &0.752 & 0.024 &0.688 & 0.024 \\
 Bonferroni            & 0.629 & 0.047 &0.632 & 0.052 &0.630 & 0.053 \\
Holm                   & 0.631 & 0.050 &0.633 & 0.057 &0.630 & 0.061 \\
FAB                    & 0.602 & 0.025 &0.609 & 0.028 &0.607 & 0.028 \\
Auxiliary-Only         & 0.795 & 0.045 &0.799 & 0.045 &0.798 & 0.046 \\
\bottomrule\multicolumn{7}{c}{Scenario 5}\\ 
\midrule 
& \multicolumn{2}{c}{$R_{k,c} = 1$} & \multicolumn{2}{c}{$R_{k,c} = 2$} & \multicolumn{2}{c}{$R_{k,c} = 10$} \\ 
Method  & subgroup 1 & subgroup 2--6 & subgroup 1 & subgroup 2--6 & subgroup 1 & subgroup 2--6 \\ 
Auxiliary-Augmented    & 0.276 & 0.058 &0.277 & 0.067 &0.284 & 0.089 \\
 Auxiliary-Augmented-B & 0.271 & 0.055 &0.251 & 0.053 &0.218 & 0.048 \\
 Bonferroni            & 0.631 & 0.049 &0.619 & 0.045 &0.621 & 0.048 \\
Holm                   & 0.633 & 0.055 &0.619 & 0.051 &0.623 & 0.052 \\
FAB                    & 0.610 & 0.027 &0.591 & 0.023 &0.594 & 0.026 \\
Auxiliary-Only         & 0.000 & 0.050 &0.000 & 0.045 &0.000 & 0.051 \\
\bottomrule 
\end{tabular}
}
 \caption{Proportion of time the null hypothesis $H_{0,1}$ has been rejected in the first subgroup, and proportion of time at least one of the null $H_{0,k}$, for $k=1,\ldots,6,$ across 5000 trials, for the methods described in \secname~{5} of the main manuscript. The Auxiliary-Augmented-B method refers to the bootstrap calibrated procedure described in Section~{5.2}.}
 \label{tab:sim_k6}
\end{table}

\begin{figure}[h!]
\centering
\includegraphics[width = 0.5\textwidth]{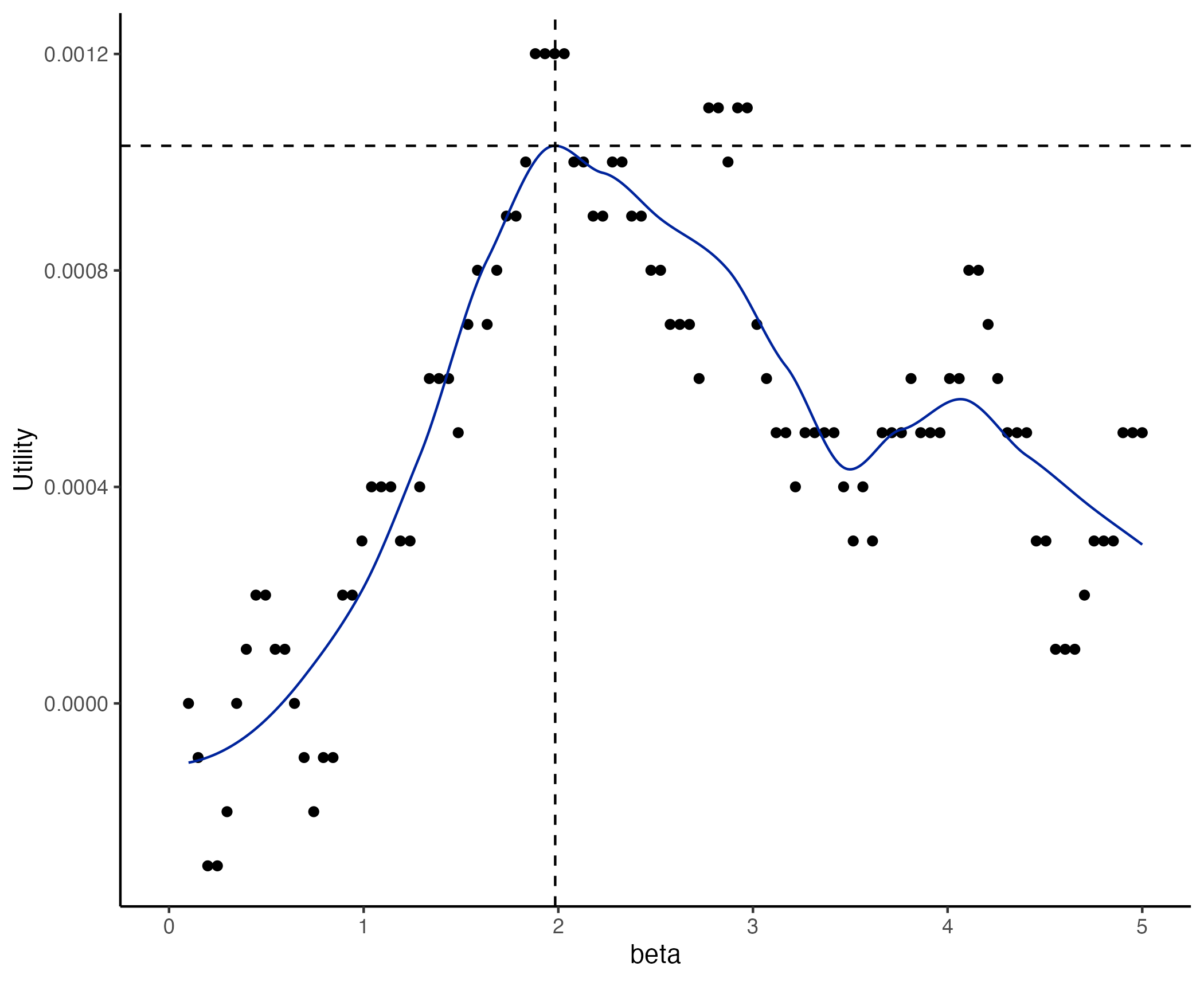}
\caption{
Expected utility with a variation of the prior model described in \secname~{5} and $\lambda = 0.5$. 
Compared to \secname~{5}, the prior changes the distribution of  $c_{Y,k},$ for $k=1,2,$ to a mixture of two components, a beta distribution with parameters $v_k = 6$ and $o_k=1$ with probability $0.9$ and a point mass at zero with probability $0.1$.  In this case, the optimal design is parametrized by $\beta_{Y,S} = 1.98.$
}
\label{fig:s3}
\end{figure}
\begin{table}[h!]
\centering
\resizebox{0.5\textwidth}{!}{
\begin{tabular}{r|cc|cc|cc}
\toprule 
& \multicolumn{2}{c}{$R_{k,c} = 1$} & \multicolumn{2}{c}{$R_{k,c} = 2$} & \multicolumn{2}{c}{$R_{k,c} = 10$} \\
Scenario 1 & 0.023 &0.027 &0.029 &0.028 &0.028 &0.026 \\
Scenario 2 & 0.034 &0.020 &0.031 &0.022 &0.033 &0.022 \\
Scenario 3 & 0.677 &0.027 &0.679 &0.030 &0.677 &0.034 \\
Scenario 4 & 0.711 &0.017 &0.706 &0.026 &0.706 &0.019 \\
Scenario 5 & 0.643 &0.032 &0.631 &0.035 &0.634 &0.037 \\
\bottomrule  
\end{tabular}

}
\caption{Operative characteristics for the {\it Auxiliary-Augmented} procedure with a variation of the prior model described in \secname~{5}. Compared to \secname~{5}, the prior changes the distribution of  $c_{Y,k},$ for $k=1,2,$ to a mixture of two components, a beta distribution with parameters $v_k = 6$ and $o_k=1$ with probability $0.9$ and a point mass at zero with probability $0.1$.  In this case, the optimal design is parametrized by $\beta_{Y,S} = 1.98$.
}
\label{tab:s5}
\end{table}

\begin{figure}[h!]
\begin{center}
\includegraphics[width = 0.69\textwidth]{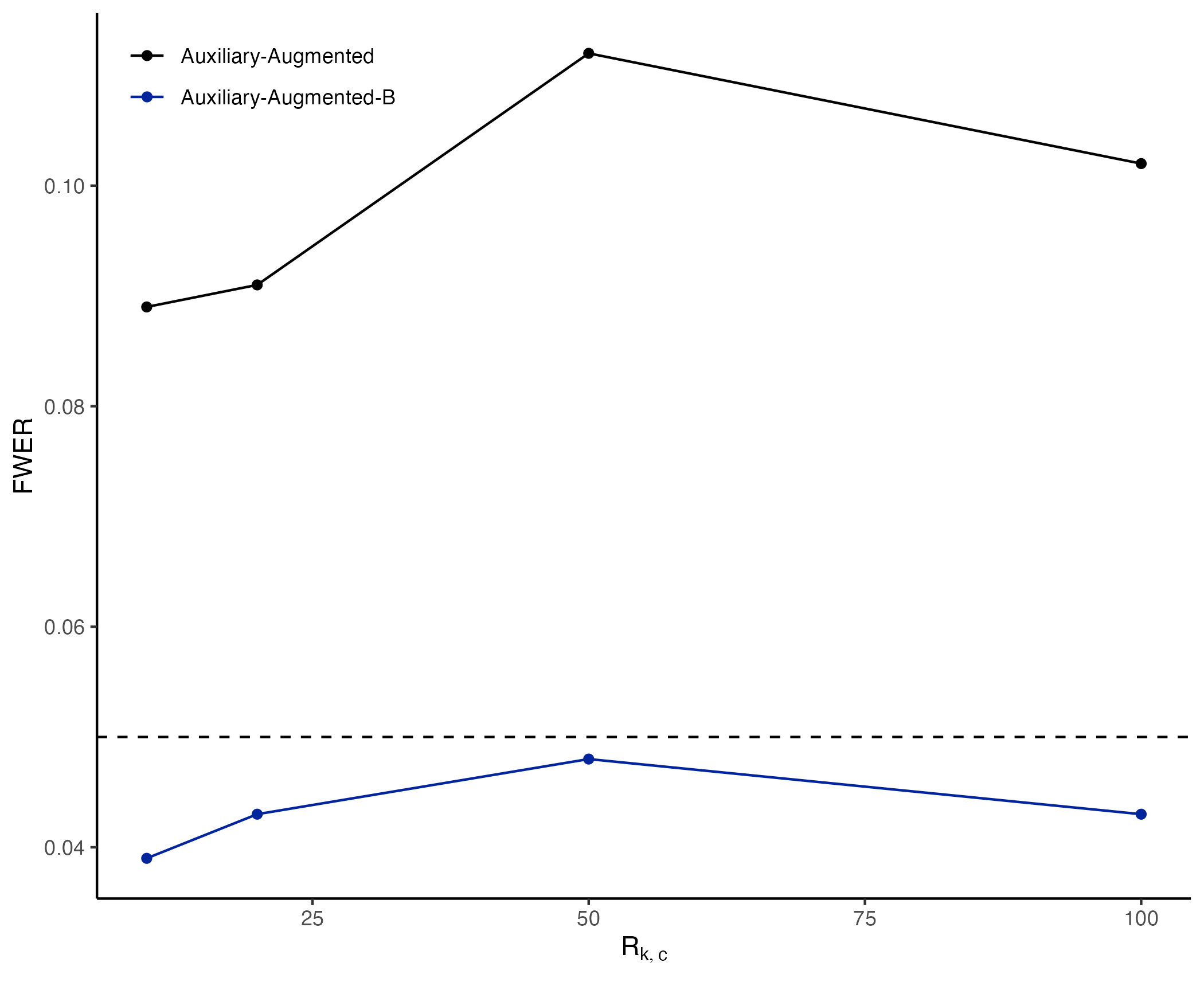}
\end{center}
\caption{We consider scenario 1 of the simulations with $K=6$ subgroups outlined in \secname~{5.1}. We increase the values of the odds ratios in the range $R_{k,c}\in \{15,20,50,100\}$.  For each $R_{k,c},$ we generate $1,000$ trials and compute the FWER for the Auxiliary-Augmented procedure and the Auxiliary-Augmented-B procedure with the optimal parameter $\beta = 11.4.$ The black dashed line is the target FWER of $0.05$.}
\label{fig:boot}
\end{figure}

\begin{figure}[h!]
	\includegraphics[width = 0.5\textwidth]{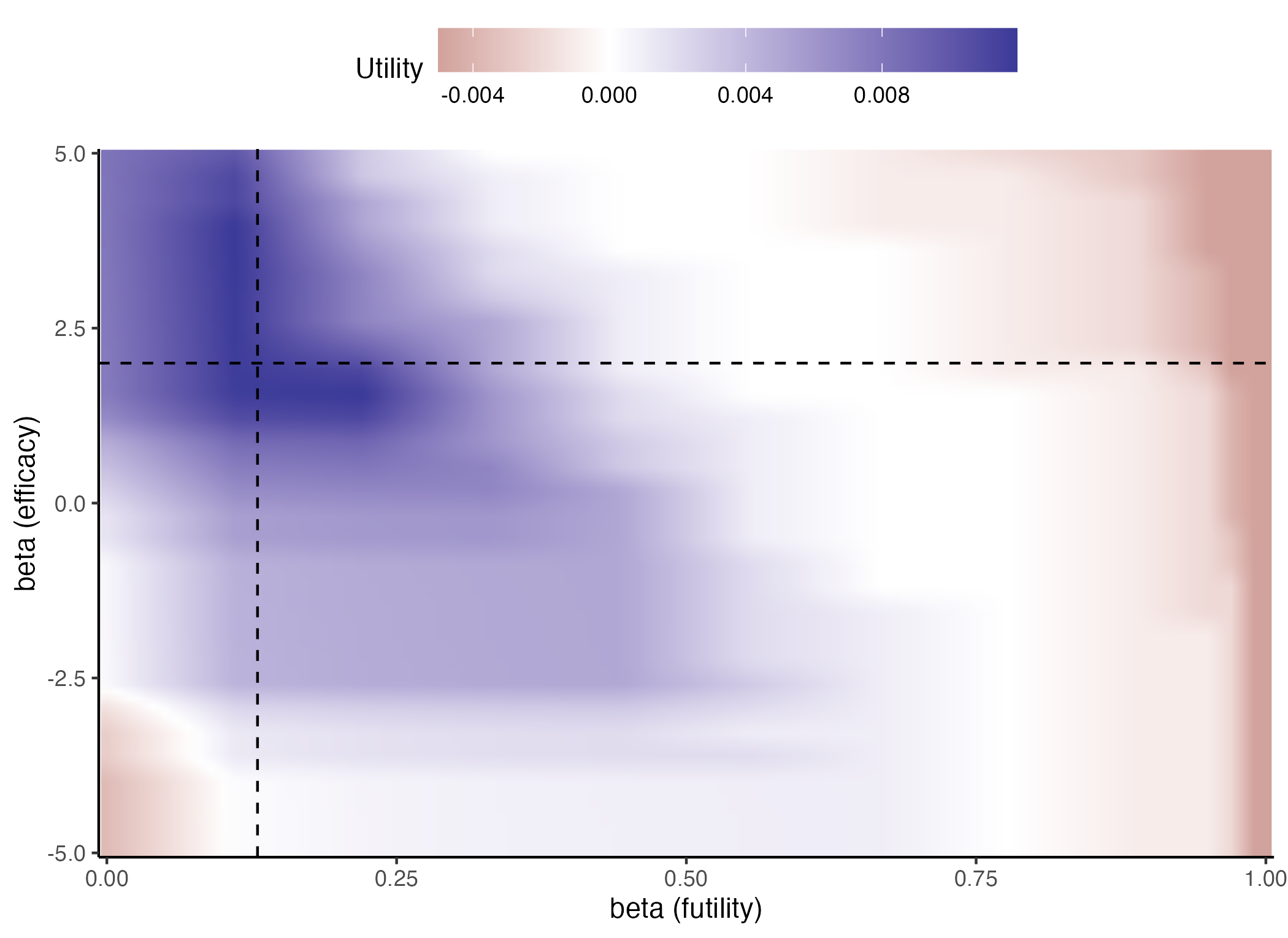}
	\includegraphics[width = 0.5\textwidth]{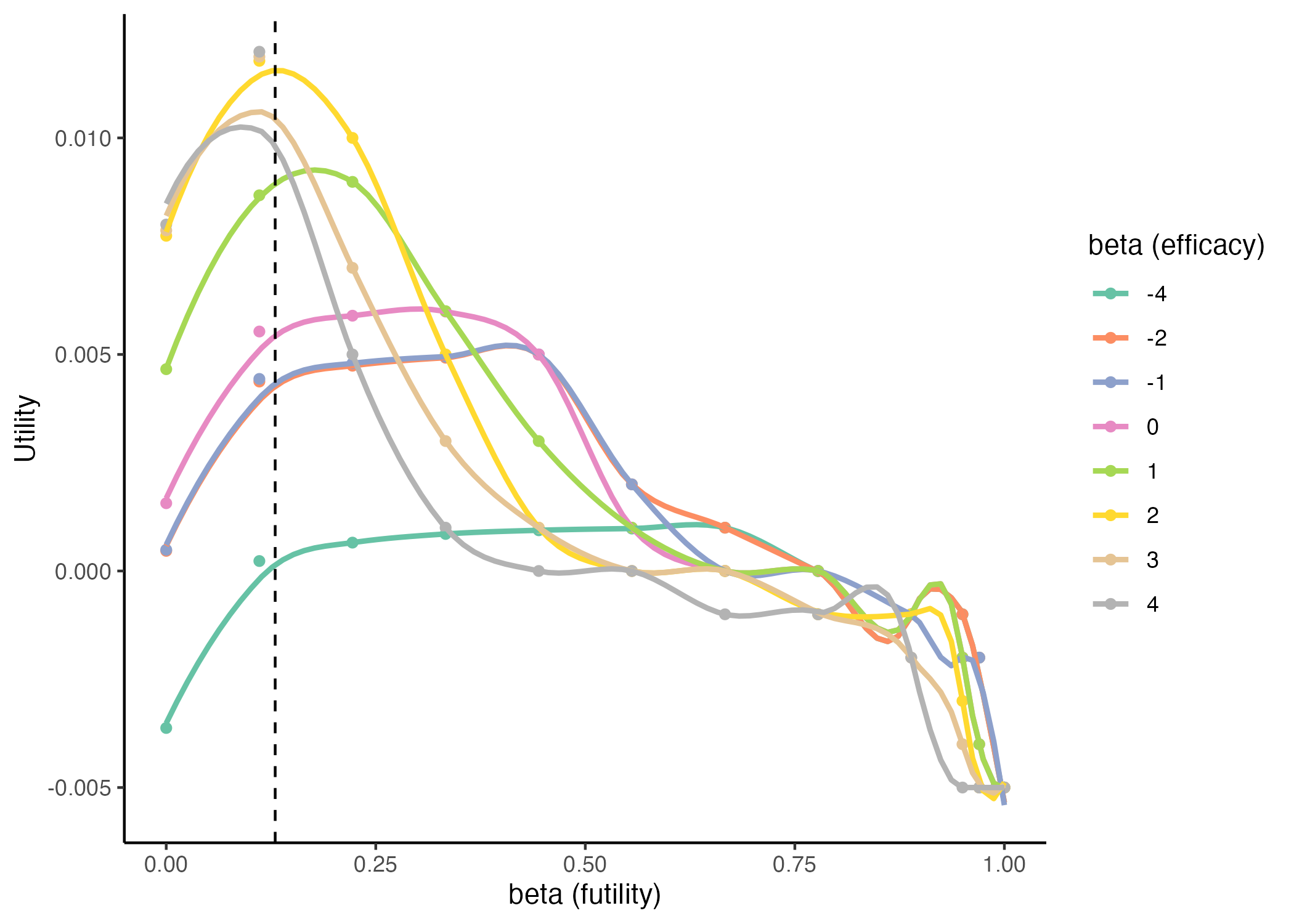}

\caption{Expected utility for the simulation study presented in \secname~{6}, with $\lambda = 0.00005,$ $\lambda^{'}_{1}  =1,$ and $\lambda^{'}_{2} = 0.5$. The left panel shows the expected utility surface, with dashed lines indicating the coordinate of the optimal solution $(\beta_e = 2, \beta_f = 0.13).$ The right panel shows the expected utility for some values of the efficacy parameter. The solid lines represent a smooth estimate of the expected utility obtained using a Local Polynomial Regression Fitting (loess) (R function \texttt{loess} with \texttt{span} = 0.4). The dashed vertical line indicates the optimal solution $\beta_f = 0.13.$} 
\label{fig:utlity_sec6}
\end{figure}

\clearpage

\section{Retrospective analysis with GBM trial data: time-to-event outcomes}\label{sec:s3}

We reimplement the retrospective analysis of the CENTRIC  trial~\citep{centric}  presented in Section 6.1 using time-to-event outcomes. The primary outcome of interest is overall survival (OS), while the auxiliary outcome is progression-free survival (PFS); both are measured from the time of enrollment in the trial.
We use the utility function in Equation (9) of the main manuscript and the decision rules described in Section 6. 

We rely on a Bayesian prior model for OS and PFS based on the work of~\citet{broglio:2009}, where the OS is decomposed in the sum of PFS and survival post-progression (SPP). In the prior model, if a treatment has a positive effect on OS, this effect is decomposed into the effect on PFS and SPP. We assume that the effect on SPP is a fraction of the effect of PFS.  % As in ~\citet{broglio:2009}, OS and PFS are assumed to be independent and exponentially distributed.
Indicating with $P_i$ the SPP for individual $i$, our prior model can be written as:  
\begin{align*}
S_i &\sim \mbox{Exp}(\theta^S_{C_i}),
\\
P_i &\sim \mbox{Exp}(\theta^P), \\
\log(1/\theta^S_{C_i}) &= \zeta^S_0 + \zeta^S_1C_i,  \\
\zeta^S_1 &\sim \xi 1\{\zeta^S = 0 \} + (1-\xi)\mathcal{N}(m_S,\sigma^2_S), \\
\zeta^S_0 &\sim \mathcal{N}(m, \sigma^2), \\
\log\{1/\theta^P_{C_i}\} &=  \xi^P_0 + c_P \zeta^S_1C_i,\\
\xi^P_0 = &\sim \mathcal{N}(m_P, \sigma^2_P), \\
c_Y &\sim \mbox{Beta}(v,o).
\end{align*}

Here $\mbox{Exp}(\theta)$ indicates an exponential distribution with mean $1/\theta.$ Note that in the formulation of \citet{broglio:2009} $c_P =0,$ that is to say, all the treatment effects are entirely mediated by PFS.
The prior for the treatment effect parameters $\zeta^S_1$ and $c_P$ follows the same logic as the model introduced in Section 5.1. Alternative models can be used to generate trials with OS and PFS. Examples include the models proposed in  \citet{renfro:2011} and \citet{chen:2020}---among others.

Values on the prior parameters used in our example are reported in the caption of Table~\ref{tab:prior_surv}, which also reports relevant characteristics of the trials generated from the prior models. 
\begin{table}[h!]
\begin{center}
\resizebox{0.9\textwidth}{!}{\begin{tabular}{r|cccccc} 
\toprule 
Characteristic                        & Mean (sd)      & Min   & 1st quantile & Median & 3rd quantile & Max    \\
\midrule 
Median OS (SOC)                       & 588.4 (389.1)  & 70.9  & 337.4        & 489.8  & 717.7        & 4976.3 \\
Median OS (Treated)                   & 600.3 (447.1)  & 42.2  & 337.0        & 487.0  & 726.2        & 7555.1 \\
Median PFS (SOC)                      & 442.7 (368.0)  & 18.3  & 211.3        & 342.3  & 547.1        & 4481.4 \\
Median PFS (Treated)                  & 457.3 (434.0)  & 14.6  & 207.0        & 341.3  & 561.1        & 7125.3 \\
Correlation between PFS and OS        & 0.9 (0.1)      & 0.0   & 0.9          & 1.0    & 1.0          & 1.0    \\
Proportion of uncersored PFS          & 0.8 (0.0)      & 0.7   & 0.8          & 0.8    & 0.8          & 0.9    \\
Proportion of uncersored OS           & 0.6 (0.0)      & 0.5   & 0.6          & 0.6    & 0.7          & 0.8    \\
Time (in days) to collect 50 OS evets & 1101.8 (218.1) & 589.6 & 933.0        & 1074.1 & 1254.8       & 1500.0 \\
\bottomrule 
\end{tabular}
}
\end{center}
\caption{Summaries of some characteristics of the prior model outlined in Section S2 of the Supplementary Material. Characteristics have been calculated  simulating $1000$ trials from the prior model, setting
$\xi =0.1,$
$m_S = 0,$
$\sigma^2_S = 0.8,$
$m=6.20,$
$\sigma^2 = 0.5,$
$m_P = 4.85,$ and
$\sigma^2_P = 0.5,$ 
$v = 6$, 
$o =1$.
}
\label{tab:prior_surv}
\end{table}
We added a censoring mechanism to our prior model to generate realistic trials. Censoring times (loss of follow-up) for each trial participant are generated from two independent exponential distributions. Similar to the actual trial, in each trial simulation, the proportion of observed PFS events is approximately $0.8$, and the proportion of observed OS events is approximately $0.6$.  We assumed an accrual rate of four patients per month. In our experience, this is a realistic accrual rate for nGBM trials. The code to generate the trials from the prior model is available at \url{https://github.com/rMassimiliano/primary\_and\_auxiliary} via the function \texttt{generate\_data\_from\_prior\_survival()}. For efficacy decisions, we used the log-rank test as in the original trial~\citep{centric}.
Other tests, such as the weighted log-rank test, could be used.
We perform one interim analysis (IA) after $50$ survival events have been collected. All the data available at the time of the IA, such as time to progression for patients still alive, are used for decision-making.

The optimizations to identify the parameters of the design is  summarized in \figurename~\ref{fig:utlity_surv}. The optimum is at $(\beta_e,\beta_f) = (4.3,0.1).$ 

\begin{figure}[h!]
\center
	\includegraphics[width = 0.5\textwidth]{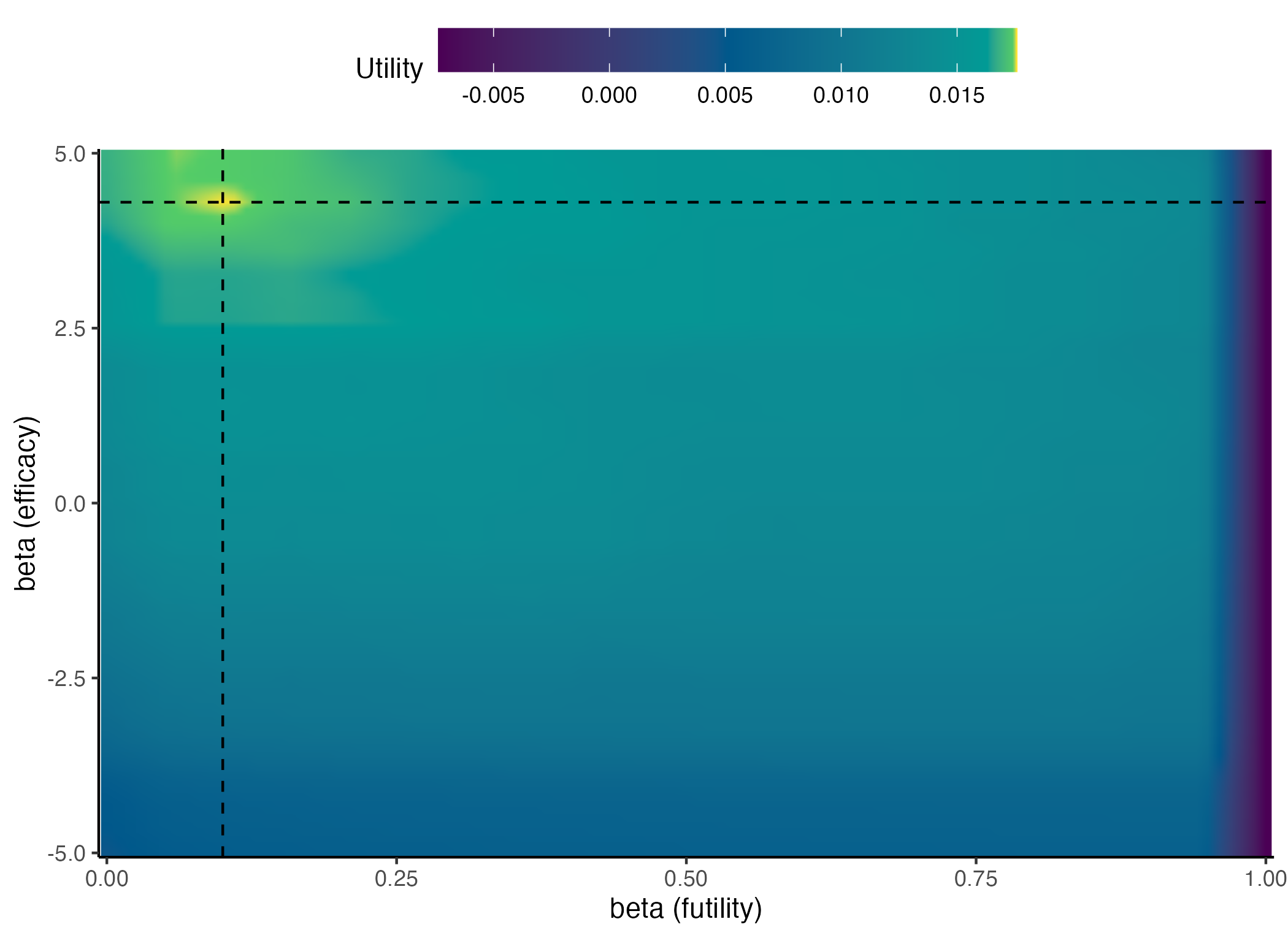}
\caption{Expected utility for the simulation study presented in \secname~{S2} of the Supplementary Materials, with $\lambda = 0.00005,$ $\lambda^{'}_{1}  =1,$ and $\lambda^{'}_{2} = 0.5$. The left panel shows the expected utility surface, with dashed lines indicating the coordinate of the optimal solution $(\beta_e = 4.3, \beta_f = 0.1).$ }
\label{fig:utlity_surv}
\end{figure}

Similarly to Section 6.1, to generate in silico trial replicates tailored to nGBM, we follow these two steps:
\begin{enumerate}
\item We sample with replacement a patient from the TMZ+RT group of the  CENTRIC dataset. The patient is randomly assigned to either the TMZ+RT or the experimental arm of our in silico trial.
\item If the patient is assigned to our in silico TMZ+RT, we include the actual OS and PFS as primary and auxiliary outcomes. If the patient is assigned to the in silico experimental arm, we use a perturbation of the actual OS and PFS to produce scenarios with positive or negative treatment effects. 
Specifically, after sampling from the control of CENTRIC,   
we multiply the observed patient's OS and PFS by $\exp\{h_y\}$ and  $\exp\{h_s\}$ respectively.
  The parameters $h_y$ and $h_s$ create in silico trials with treatment effects.  
\end{enumerate}

By adjusting the values of $h_y$ and $h_s$, we created scenarios to examine the trial design, with or without concordance of the treatment effects on primary and auxiliary outcomes. We considered three scenarios: 1. (null scenario) $h_y = h_s = 0$ there are no treatment effects on both primary and auxiliary outcome; 2. (concordant effects) $h_y = 0.55$ and $h_s = 0.5$; 3. (discordant effects) $h_y = 0$ and $h_s = 0.5,$ mimicking a trial where an effect is present for PFS but not on OS,  as for lomustine and bevacizumab~\citep{wick:2017}.

 We compare our approach (\textit{Auxiliary-Augmented} in \tablename~\ref{tab:surv_data}) with two alternative trial designs. The first, the  {\it Primary-Only} design,  uses only the primary outcome data for efficacy and futility analyses. This design uses the same group-sequential efficacy rule as the \textit{Auxiliary-Augmented} design, which is based on a $\alpha$-spending (10) with $\beta_{E} = 4.3.$  The futility decisions are based on a Bayesian model that ignores auxiliary information. Specifically, we assume that the OS is exponentially distributed, with mean $\exp\{\zeta^Y_0 +  \zeta^Y_1 C_i\}.$ We let $\zeta^Y_0 \sim \mathcal{N}(6.40, 0.5)$, and $\zeta^Y_1 \sim \xi 1\{\zeta^Y = 0 \} + (1-\xi)\mathcal{N}(0,0.8)$, with  $\xi = 0.1.$ All the remaining parameters of the design are the same of the {\it Auxiliary-Augmented}, including the futility threshold $\beta_F = 0.1$. The other design in our comparisons is the \textit{Auxiliary-Only} design. It is nearly identical to the \textit{Primary-Only} design but replaces the primary outcomes with the auxiliary outcomes for the decisions of efficacy and futility.  
\begin{table}[h!]
\center
\resizebox{\textwidth}{!}{ 
 \begin{tabular}{r|cc|cc|cc} 

    \toprule 

& \multicolumn{2}{c}{Scenario 1} & \multicolumn{2}{c}{Scenario 2} & \multicolumn{2}{c}{Scenario 3} \\  
& \multicolumn{2}{c}{null scenario} & \multicolumn{2}{c}{concordant effects} & \multicolumn{2}{c}{discordant effects} \\ 
\midrule 
Method &  Power &   Stop for futility & Power & Stop for futility & Power &   Stop for futility \\ 
Auxiliary-Augmented & 0.04 (0.03; 0.00) & 0.87 & 0.83 (0.78; 0.05) & 0.05 & 0.05 (0.04; 0.01) & 0.70 \\
Primary-Only        & 0.05 (0.04; 0.01) & 0.72 & 0.81 (0.76; 0.05) & 0.14 & 0.05 (0.04; 0.01) & 0.72 \\
Auxiliary-Only      & 0.05 (0.04; 0.01) & 0.63 & 0.88 (0.60; 0.27) & 0.03 & 0.72 (0.41; 0.31) & 0.06 \\
\bottomrule  
 \end{tabular}

}
\caption{Proportion of times the null hypothesis $H_0$ of no treatment effect 
on the primary outcome (auxiliary outcome for the \textit{Auxiliary-Only} design)
is rejected (Power), and the proportion of time the trial is stopped for futility  (``Stopping for futility'') for 5000 simulated trials using the CENTRIC data. In parenthesis,  we report the frequency of simulations in which $H_0$ is rejected at the interim and final analysis, respectively.
}
\label{tab:surv_data}
\end{table}

In scenario 1, without treatment effects, all three designs control the type-I error at the $5\%.$ The \textit{Auxiliary-Augmented} design has the largest proportion of trials stopped for futility, $87\%,$ compared to $72\%$ for the \textit{Primary-Only} design, and $63\%$ for the  \textit{Auxiliary-Only} design. In scenario 2,  with treatment effects both on primary and auxiliary outcomes,   the \textit{Auxiliary-Augmented} and \textit{Primary-Only} designs have $83\%$ and $81\%$ power respectively compared to and $88\%$ for the \textit{Auxiliary-Only} design. The \textit{Auxiliary-Augmented} design has a smaller percentage of stopping for futility (5\%) than the \textit{Primary-Only} design (14\%). This reflects that joint modeling of primary and auxiliary outcomes can effectively capture the presence of treatment effect, reducing the likelihood that a trial is erroneously stopped for futility.

In scenario 3, with a treatment effect only on the auxiliary outcomes, as expected, the \textit{Auxiliary-Only} design has an inflated type-I error $72\%$, while the  \textit{Auxiliary-Augmented} and \textit{Auxiliary-Only} designs control the type-I error at $5\%.$ Due to promising auxiliary data, the   \textit{Auxiliary-Augmented} design has a slightly higher proportion of trials stopped for futility,  $72\%$ patients compared to  $70\%$ of the \textit{Auxiliary-Only} design. Importantly, in this scenario, the \textit{Auxiliary-Augmented} prevents inflation of type-I error based on promising auxiliary data.

{%\setstretch{1.1}
\bibliographystyle{ba-modified}
\bibliography{references.bib}
}

\end{document}